\newcommand{\ccA}{{\mathscr A}}\newcommand{\cA}{{\mathcal A}}
\newcommand{\ccB}{{\mathscr B}}
\newcommand{\cD}{{\mathcal D}}
\newcommand{\ccF}{{\mathscr F}}\newcommand{\cF}{{\mathcal F}}
\newcommand{\ccG}{{\mathscr G}}\newcommand{\cG}{{\mathcal G}}
\newcommand{\cK}{{\mathcal K}}
\newcommand{\ccN}{{\mathscr N}}
\newcommand{\cX}{{\mathcal X}}
\newcommand{\Ind}{{\mathds 1}}
\newcommand{\ind}[1]{\Ind_{\{#1\}}}
\newcommand{\restr}{\mathbf{\kern0.3ex
 \vert\kern-0.3ex}\backprime\kern0.3ex}
\newcommand{\FF}{\mathbb{F}}
\renewcommand{\cF}{\ccF}
\renewcommand{\cG}{\ccG}
\newtheorem{theorem}{Theorem}[section]
\newtheorem{lemma}[theorem]{Lemma}              
\newtheorem{proposition}[theorem]{Proposition}  
\theoremstyle{definition}
\newtheorem{example}[theorem]{Example} 
\newtheorem{definition}[theorem]{Definition} 
\newtheorem{remark}[theorem]{Remark}
\newtheorem{assumption}[theorem]{Assumption}
\newcommand{\psemac}{\mathfrak{P}_{\text{sem}}^{\text{ac}}}
\newcommand{\aset}{\mathcal{A}(t,x_t,\Theta)}
\DeclareMathOperator{\proj}{proj}
\newcommand{\bbF}{\FF}
\newcommand{\obar}[1]{\bar{#1}}
\newcommand{\ubar}[1]{\text{\underline{$#1$}}} 
\newcommand{\fP}{\mathfrak{P}}
\newcommand{\R}{\mathbb{R}}
\DeclareMathOperator{\sem}{sem}
\DeclareMathOperator{\Lip}{Lip}
\DeclareMathOperator{\ac}{ac} 
\newcommand{\C}{\mathbb{C}}
\newcommand{\cNN}{\mathscr{N\hspace{-2mm}N}}
\begin{document}

\title[Affine models with path dependence]{Affine models with path-dependence under parameter uncertainty and their application in Finance}

\author[Geuchen]{Benedikt Geuchen}
\address{56 Rue de la Rochette, 77000 Melun, France}
\email{b.geuchen@web.de}
\author[Oberpriller]{Katharina Oberpriller}
        \address{University of Munich, Theresienstr. 39, 8033 M\"unchen, Germany.}
        \email{oberpriller@math.lmu.de}
\author[Schmidt]{Thorsten Schmidt}
        \address{Albert-Ludwigs University of Freiburg, Ernst-Zermelo-Str. 1, 79104 Freiburg, Germany.}
        \email{ thorsten.schmidt@stochastik.uni-freiburg.de}

    \date{\today. }
    \thanks{
    The authors acknowledge support by the state of Baden-W\"urttemberg through bwHPC and the German Research Foundation (DFG) through grant INST 35/1134-1 FUGG and through the DFG grant SCHM 2160/15-1}

\begin{abstract}
In this work we consider one-dimensional generalized affine processes under the paradigm of Knightian uncertainty (so-called non-linear generalized affine models). This extends and generalizes previous results in \cite{FadinaNeufeldSchmidt2019} and \cite{lutkebohmert2021robust}. In particular, we study the case when the payoff is allowed to depend on the path, like it is the case for barrier options or Asian options.

{To} this end, we  develop the path-dependent setting for the value function 
relying on functional It\^o calculus. We establish a dynamic programming principle which then leads to a functional non-linear Kolmogorov equation describing the evolution of the value function. 
While for Asian options, the valuation can be traced back to PDE methods, this is no longer possible for more complicated payoffs like barrier options. To handle such payoffs in an efficient manner, we approximate the functional derivatives with deep neural networks and show that the numerical valuation under parameter uncertainty is highly tractable. 

Finally, we consider the application to structural modelling of credit and  counterparty risk, where both parameter uncertainty and path-dependence are crucial and the approach proposed here opens the door to efficient numerical methods in this field. 
\bigskip

\noindent\textbf{Keywords:} affine processes, Knightian uncertainty, Vasi\v cek model, Cox-Ingersoll-Ross model, non-linear affine process, Kolmogorov equations, fully non-linear PDE, functional It\^o calculus, deep-learning, Merton model, structural models, credit risk
\end{abstract}

\maketitle 

\section{Introduction}

Knightian uncertainty plays an increasingly important role in modern financial mathematics.
This is because  \emph{model risk}
is always a concern in rapidly changing dynamic environments such as financial markets. In the approach taken in this work we overcome model risk by not  fixing a single (and of course
in reality unknown) probabilistic model, and consider instead a suitable class of  models. Frank Knight proposed to consider  worst-case prices over this class as a natural approach which acknowledges uncertainty. For practical applications, it turns out that (too) large classes of models lead to prohibitively expensive worst-case prices. The goal of our work is  therefore to study a suitably large and flexible class that does not result in unrealistic prices. 

The class we will focus on, \emph{generalized non-linear affine models} as proposed in \cite{lutkebohmert2021robust}, extends the class of affine models  and provides a suitable representation of uncertainty. 
Affine models are a frequently used model class in practice since they  combine a large flexibility together with a high
tractability (see for example \cite{DuffieFilipovicSchachermayer}, \cite{KellerResselSchmidtWardenga2018} and references therein). 

Treating uncertainty in dynamic frameworks is a delicate question, since already establishing a dynamic programming principle requires checking measurability and pasting properties. {The pioneering papers dealing with this topic are e.g. \cite{denis_hu_peng}, \cite{NeufeldNutz2014}, \cite{NeufeldNutz2017}, \cite{Nutz2013}, \cite{peng2019nonlinear} and \cite{peng}. Further contributions relying on second-order backward stochastic differential equations have been provided in \cite{cheridito-et-al-05}, \cite{matoussi_possamai_zhou}, \cite{soner_touzi_zhang_2012} and \cite{soner_touzi_zhang_2013}.} The work \cite{FadinaNeufeldSchmidt2019} was, to the best of our knowledge, the first one which considered uncertainty in an affine setting. In contrast to $G$-Brownian motion or $G$-L\'evy processes, studied for example in \cite{peng2019nonlinear}, \cite{NeufeldNutz2017}, \cite{denk2017semigroup}, the set of probability measures which are taken into account in this class depends on the state of the observation. This is a key feature in financial markets: uncertainty for a stock being worth a few pennies naturally differs from a stock being worth 1.000\$. 

Taking Knightian uncertainty into account results in pricing mechanisms which are robust against model risk and hence play a prominent role in the literature, see \cite{Martini,cont-06,Acciaioetal2016,muhle2018risk,bielecki2018arbitrage}, and the book \cite{guyon2013nonlinear}. {Moreover, a link to dual pricing has been established for example in \cite{bartl_kupper_neufeld}, \cite{dolinsky_soner}, \cite{HERNANDEZHERNANDEZ2007980}, \cite{nutz_2015}, \cite{park_wong_2022}, and \cite{schied_2007}.} The class of \emph{non-linear generalized affine processes} (i.e.\ generalized affine processes under parameter uncertainty) are able to capture this phenomena. Non-linear affine processes have been applied to a reduced-form intensity model in \cite{biagini2020reduced} and have been further extended in \cite{biagini_bollweg_oberpriller_2022}, \cite{criens2022non}. 

In this work we extend the non-linear generalized affine model to the path-dependent case. This requires a technical extension of the setup using functional It\^o calculus which has been established in \cite{ContFournie2013}. As in \cite{FadinaNeufeldSchmidt2019} and \cite{lutkebohmert2021robust}, we associate non-linear expectations to this class of path-dependent non-linear generalized affine processes and derive the corresponding dynamic programming principle. Furthermore, we provide a functional version of the non-linear Kolmogorov equation. The latter result opens the door to the robust valuation of path-dependent options. Note that the theoretical results in \cite{FadinaNeufeldSchmidt2019} and \cite{lutkebohmert2021robust} only allow the robust pricing of derivatives whose payoff depends on the value of the underlying at the time of maturity. Moreover, here the valuation relies on a non-linear PDE while in the path-dependent case we work with a non-linear path-dependent PDE.

{While in  special cases the value function turns out to be the solution of a linear PDE (see Proposition \ref{korollar:kolmogorov_folgenrung2}), we generally have to solve a non-linear functional Kolmogorov equation in order to valuate path-dependent options. Dealing with such an equation numerically leads to  the following two problems. First, the non-linear structure of the PDE raises some non-trivial difficulties. Second, the path-dependence of the PDE adds an extra layer of complexity. Thus, it is not possible to come up with a convincing numerical approach for this kind of highly complicated equations in full generality. The aim of the examples we provide in this paper is to outline conditions under which a tractable valuation is feasible.}

{To do so, we first focus on Asian options by making use of the smoothing effect of the average value such options depend on. Second, we consider more general payoffs and rely on machine learning methods for the evaluation.} More specifically, we rewrite the problem in terms of a forward-backward SDE and then use deep neural networks to approximate the functional derivatives, as suggested in \cite{EHanJentzen2017,BeckEJentzen2019}. We illustrate this approach for an up-and in digital option and conclude by comparing our results to the existing methods.

To the best of our knowledge these two approaches are the first results which allow to make use of non-linear affine processes with path-dependence in applications and thus could be a relevant starting point for further research into this area. At the same time we would like to stress that the necessary conditions, which allow us to gain some more regularity of the associated value function, are an additional technical challenge for applications which needs to be studied further in future works.

The paper is organized as follows: in Section \ref{sec:UncertaintyForStochasticProcesses} we introduce path-dependent non-linear affine processes. In Section \ref{sec:DynamicProgramming} we derive the dynamic programming principle and in Section \ref{subsec:kolmogorovgleichung} we provide a functional non-linear Kolmogorov equation. In Section \ref{subsec:beispiele} we study the valuation of Asian options, whereas we focus on barrier options and the numerical solution of functional PDEs in Section \ref{sec:BarierOptions}.

\section{Path-dependent uncertainty for stochastic processes} \label{sec:UncertaintyForStochasticProcesses}

The class of (generalized) affine processes is a well-established class in financial applications and beyond. One reason for this is its high tractability due to an expression of the Fourier transform in terms of Riccati equations. {The successful application of these models in a dynamic environment like financial markets requires an appropriate treatment of the associated model risk. Our aim is there to introduce a suitable toolkit for this, using Knightian uncertainty, even if the high tractability of classical affine models is lost.} In particular, for settings with path-dependence, high tractability is rather exceptional. Nevertheless, we provide a non-linear Kolmogorov equation which opens the door to fast machine learning methods. 

Before we pursue this, we introduce a precise formulation of the  setting we are working in. To {do so, we} fix a  time horizon $T>0$ and let $\Omega =  C([0,T])$ be the canonical space of continuous, one-dimensional paths until $T$. 
 We endow $\Omega$ with the topology of uniform convergence and denote by $\ccF$ its Borel $\sigma$-field. 
Let $X$ be the canonical process $X(t,\omega) = \omega(t)$, 
 and let  $\mathbb{F}= (\ccF_t )_{t \geq 0}$ with $\ccF_t = \sigma (X(s), 0 \leq s \leq t)$ be the (raw) filtration generated by $X$.

\begin{remark}[On the multi{-}dimensional setting] \label{rem:multidim}
While in the core of the paper we study the one-dimensional framework and therefore concentrate on one-dimensional processes, the extension to higher dimensions is treated in Section \ref{subsec:Counterparty_risk}. It is a straightforward {generalization} and to keep notation simple we postpone this to a later point. 	
\end{remark}

Since we are interested in path-dependency, we need a proper notation to distinguish between the value of a process $X$ at time $t$ and the path until time $t$. For a path, or respectively a stochastic process, $x$ in the space $C([0,T])$ we denote by $x(t)$ its value at time $t \in [0,T],$ while $x_t=(x(s): 0 \le s \le t)$  denotes its path from $0$ to $t$. {This notation will be used throughout the paper.}

Next, we denote by $\fP (\Omega)$  the Polish space of all probability measures on $\Omega$ equipped with the topology of weak convergence\footnote{The weak topology is the topology induced by the bounded continuous functions on $\Omega$. Then, $\fP(\Omega)$ is a separable metric space and we denote the associated Borel $\sigma$-field by $\ccB(\fP(\Omega))$. }. 
{For} a probability measure  $P \in \mathfrak{P}(\Omega)$ and $t \in [0,T]$, we say that the process $X$ from time $t$ on, denoted by  $X^t=X^t(s)_{s \in [0,T-t]}=(X(t+s))_{s \in [0,T-t]}$, is a  $P$-semimartingale for its raw right-continuous version of the filtration, if there exists a (continuous)  $P$-local martingale $M^{t,P}=(M^{t,P}(s))_{s \in [0,T-t]}=(M^P(t+s))_{s \in [0,T-t]}$  and a continuous, adapted process  $B^{t,P} = (B^{t,P}(s))_{s \in [0,T-t]}=(B^P(t+s))_{s \in [0,T-t]}$ which has $P$-a.s.\ paths with finite variation, such that $M^{t,P}(0) = B^{t,P}(0) = 0$ $P$-a.s. and
\begin{align*}
X^t = X(t) + M^{t,P} + B^{t,P} \quad P\text{-a.s.}
\end{align*}

A semimartingale can be described by its \emph{semimartingale characteristics}: we denote by $C^t=(C^t(s))_{s \in [0,T-t]}=(C(t+s))_{s \in [0,T-t]}$ the quadratic variation of $X^t$ (which can be chosen independently of $P$). 
Hence, $C^t(s) = \langle X^t \rangle(s) = \langle M^{t,P} \rangle(s)$, for $s \in [0,T-t]$, where $\langle X \rangle$ denotes the predictable quadratic variation of $X$. The pair $(B^{t,P},C^t)$ is called \emph{(semimartingale) characteristics} of $X^t$. 

The class of processes we are aiming at, are  semimartingales with absolutely continuous characteristics. A semimartingale $X^t$ has absolutely continuous characteristics, if its semimartingale characteristics satisfy
\begin{align*}
B^{t,P}(s) = \int_0^{s} \beta^{t,P}(u) du, \quad C^t(s) = \int_0^{s} \alpha^t(u) du, \quad \forall s \in [0,T-t], \ P\text{-a.s.}
\end{align*}
with predictable processes $\beta^{t,P}=(\beta^{t,P}(s))_{s \in [0,T-t] }=(\beta^P(t+s))_{s \in [0,T-t] }$ and $\alpha^t=(\alpha^t(s))_{s \in [0,T-t] }=(\alpha(t+s))_{s \in [0,T-t] }$. The pair $(\beta^P,\alpha)$ is called \emph{differential characteristics} of $X$. We note that $\alpha$ can be chosen independently of $P$ since the quadratic variation is a path property. Moreover, {the differential characteristics} are independent of $t$ in the following sense: if $X$ is a semimartingale on $[t_1,T]$ and $t_2 > t_1$, then {$X$} is also a semimartingale on $[t_2,T]$ {and} the differential characteristics coincide on $[t_2,T]$. 

For $t \in [0,T]$ we denote by $$\psemac(t)$$ the set of probability measures such that $X^t$ is a $P$-semimartingale for its raw right-continuous filtration, which has absolutely continuous characteristics. {By Proposition 2.2 in \cite{NeufeldNutz2014}, $X^t$ is a semimartingale with respect to the raw filtration if and only if it is a semimartingale with respect to its right-continuous {extension}, such that it is no restriction to consider the {raw} right-continuous filtration here.}

\bigskip

\subsection{Generalized affine processes under uncertainty}

In {the classical setting, a one-dimensional} generalized affine process (without parameter uncertainty) is the unique strong solution of the stochastic differential equation
\begin{align}
\label{eq:affiner_prozess_sde}
dX(t) = \left(b_0+b_1X(t)\right)dt + \left(a_0+a_1X(t)\right)^{\gamma} \,dW(t), \quad X(0) = x_0,
\end{align}
where $a_i,b_i \in \mathbb{R}, \ i=0,1, \ \gamma \in [1/2,1]$ and $W$ is a Brownian motion. {This class of diffusion processes is very flexible and} as special cases it encompasses for example the Black-Scholes model, the Vasi\v cek model, the {Cox-Ingersoll-Ross} model, and the {Constant Elasticity of Variance} model. 

\smallskip 

The class of affine models and its generalizations has intensively been studied in the literature and we refer to \cite{DuffieFilipovicSchachermayer} or to \cite{cuchieroFilipovicTeichmann:ATS} for further information. In particular, they show excellent tractability, which allows for fast and flexible calibration, see (inter alia) \cite{EberleinGrbacSchmidt2013, gumbel2020machine, grbac2015affine}. Besides interest rates, applications in credit risk (\cite{FilipovicSchmidt2010,FilipovicSchmidtOverbeck2011,TSchmidt_InfiniteFactors, ErraisGieseckeGoldberg2010}, to name just a few) or stochastic mortality (see for example \cite{schrager2006affine, russo2011calibrating, biffis2005affine,zeddouk2020mean}) are very popular. 

\smallskip

In all {these} practical applications, {parameters of the underlying model} need to be estimated. The research in this field, and in particular the above listed references,  highlight that estimation and calibration is far from straightforward. Hence, a certain degree of model risk always remains. This motivates the  concise treatment of parameter uncertainty  in applications to avoid overlooking uncertainty in estimated or calibrated parameters. Since many derivatives and insurance contracts turn out to be path-dependent,
we will study this topic in detail. To the best of our knowledge this  is the first time that parameter uncertainty for path-dependent payoffs is considered for a general and yet tractable model class.

\smallskip

To this end, we consider \emph{parameter uncertainty}, represented by uncertainty in the differential characteristics of the observed process. We will detail this step precisely in the following. A key observation will be made: the developed class of \emph{non-linear generalized affine processes} (NGA) encompasses \emph{all} semimartingale laws whose differential characteristics remain in between certain boundaries. Thus, this class contains significantly more models than only the Markovian models listed above and hence itself becomes a highly flexible class. The reason why to restrict to NGA and not studying larger classes is our aim to provide rather small classes suitable for practical applications. Larger classes (like for example \emph{all} diffusions) typically contain many risks which an investor might classify as unlikely. This might result into unrealistic price mechanisms since many investors might not be willing to pay for these seemingly unlikely risks. 

\smallskip

The differential characteristics of the process $X$ in Equation \eqref{eq:affiner_prozess_sde} are given by $\beta = b_0+b_1X$ and $\alpha = (a_0+a_1X)^{2\gamma}$.
For $\gamma=1/2$, we recover the case of an  affine process.
\emph{Parameter uncertainty} is inspired by statistics or calibration exercises, where the parameter vector $\theta=(b_0,b_1,a_0,a_1, \gamma)$ is not known but needs to be estimated from data. This estimation comes with an error and we specify error bounds (for example obtained from confidence intervals of {the} calibration to bid- and ask-prices) 
 $\underline b_i \le \overline b_i$, $\underline a_i \le \bar a_i$, $i=0,1$, and $\underline \gamma \leq \overline \gamma$, respectively. 
 
 Summarizing, parameter uncertainty in the NGA is specified by the set 
\begin{align}
\label{Theta}
\Theta := 
[\underline b_0,\overline b_0] \times 
[\underline b_1,\overline b_1] \times 
[\underline a_0,\overline{{a_{0}}}] \times 
[\underline a_1, \overline a_1]  \times
[\underline \gamma, \overline \gamma]
\subset \R^2 \times \R_{\ge 0}^2 \times [1/2, 1]{,}
\end{align}
and the case {in which} $\Theta$ is a singleton correspsonds to the classical case without uncertainty.

To assess the semimartingale characteristics, we need to specify the intervals generated by the associated affine functions. In this regard, we {set}
$B:=[\underline b_0,\overline b_0] \times [\underline b_1,\overline b_{1}]$, 
$A:=[\underline a_0,\overline a_0] \times [\underline a_1,\overline a_{1}]$ and $\Gamma:=[\underline \gamma, \overline \gamma].$
As usual we denote by $(a_0,a_1)$ the two components of the vector $a \in \R^2$, and similarly for $b\in \R^2$. The uncertainty intervals around the current observation $\{x=X(t)\}$ are denoted by 
\begin{align} \label{def:AB}
    \begin{aligned} 
    b^*(x) & := \lbrace b_{0}+b_{1}x: b \in B \rbrace, \quad \text{ and } \quad 
    a^*(x) & := \lbrace \left (a_0+a_1x^+\right)^{2 \gamma}: a \in A, \gamma \in \Gamma  \rbrace,
\end{aligned} \end{align} 
for $x \in \R$.
As the state space will, in general, be $\R$, we have to ensure non-negativity of the quadratic variation  which is achieved using $(\cdot)^+:=\max\{\cdot,0\}$ in the definition of $a^*$.

We now introduce a generalized affine process under uncertainty which follows a specific path $x \in C({[}0,t{]})$ until time $t \in [0,T]$. 
\begin{definition}\label{def:nlaffine}
Fix $t \in [0,T]$ and $x \in C([0,t])$. 
A \emph{non-linear generalized affine process with history $x$ until $t$} is the family of semimartingale laws {$ \cA(t,x_t,\Theta)$, such that for each} $P \in \fP_{\sem}^{\ac}(t)$ {it holds}
\begin{enumerate}[(i)]
  \item  $P(X_t=x_t)=1$, 
  \item  $\displaystyle \beta^P(s) \in b^*(X(s)) $, and $\alpha(s) \in a^*(X(s)) $ for $dP\otimes dt$-almost all $(\omega,s) \in \Omega \times (t,T]$.
\end{enumerate}
For $\ubar{\gamma}=\obar{\gamma}=1/2$ we call such a family of semimartingale laws $P \in \fP_{\sem}^{\ac}(t)$ satisfying (i) and (ii) \emph{non-linear affine process with history $x$ until $t$.} 
\end{definition}

{Note that for $t=T,$ condition (ii) in the above definition is automatically satisfied  and thus $\cA(T,x_{{T}},\Theta)$ consists of all semimartingale measures $P \in \fP_{\sem}^{\ac}(T)$ such that $P(X_T=x_T)=1$. }

As explained in the introduction, parameter uncertainty is represented by a family of models replacing the single model in the approaches without uncertainty: according to Definition \ref{def:nlaffine}, the affine process under parameter uncertainty is represented by a \emph{family} of semimartingale laws instead of a single one. 
Intuitively, the two parts of the definition correspond to a generalized affine process under uncertainty (condition (ii)), which follow{s} the path $x_t=(x(s):s \in [0,t])$ until $t$ (condition (i)).

\begin{remark}[On the generality of the class {$\cA$}]
The class of non-linear generalized affine processes contains all Markov processes solving Equation \eqref{eq:affiner_prozess_sde}. In addition, it contains a large number of non-Markovian processes. {In particular}, all semimartingales  whose  characteristics remain in the set-valued processes $b^*(X)$ and $a^*(X)$ are included. As an example one could think of a SABR model conditioned on the restriction that the stochastic volatility $\sigma$ remains in the interval $[ \ubar a_1, \obar a_1]$. If {this} interval is not too small, {it is natural to assume that the volatility lies in this interval.}\end{remark}

\subsection{First properties}
By classical existence results for stochastic differential equations, we obtain that the class $\cA(t,x_t,\Theta)$ is not empty. 

\begin{proposition}
\label{prop:existenz_affiner_prozesse}
Consider  $t \in [0,T]$ and measurable mappings $b_0,b_1,a_0,a_1, \gamma : [t,T] \times \R \to \R$,  each with values in  $[\ubar{b}_0,\obar{b}_0]$, $[\ubar{b}_1,\obar{b}_1]$, $[\ubar{a}_0,\obar{a}_0]$, $[\ubar{a}_1,\obar{a}_1]$ and $[\ubar{\gamma}, \obar{\gamma}]$, respectively. 
Define
\begin{align}
b(s,y) 
&:= 
b_0(s,y) + b_1(s,y)y, \label{eq:SetNonEmpty1}\\ 
a(s,y) & := (a_0(s,y) + a_1(s,y)y^+)^{2\gamma(s,y)}\label{eq:SetNonEmpty2}
\end{align}
and assume that  $b(s,\cdot)$ and $a(s,\cdot)$ are continuous for all $s \in [t,T]$.
Then, for all $x \in C([0,T])$ there exists a $P \in \aset$, such that for the differential characteristics $(\beta^P,\alpha)$ under $P$,
\begin{align*}
\beta^P (\omega,s) = b(s,X(\omega,s)) \quad \text{ and } \quad \alpha(\omega,s) = a(s,X(\omega,s)) 
\end{align*}
for $dP \otimes dt$-almost all $(\omega,s) \in \Omega \times (t,T]$.
\end{proposition}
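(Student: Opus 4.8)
The plan is to realize $P$ as the law of a weak solution of a stochastic differential equation, glued to the prescribed deterministic history on $[0,t]$. First I would record the two structural properties of the coefficients that make classical existence applicable. Continuity in the space variable is assumed outright for $b(s,\cdot)$ and $a(s,\cdot)$, and since $b_0,b_1$ take values in the bounded intervals $[\ubar{b}_0,\obar{b}_0],[\ubar{b}_1,\obar{b}_1]$ one has linear growth $|b(s,y)|\le C(1+|y|)$; because $a_0,a_1\ge 0$ and $\gamma\le 1$, the base of $a(s,y)=(a_0(s,y)+a_1(s,y)y^+)^{2\gamma(s,y)}$ is nonnegative with exponent in $[1,2]$, so $a\ge 0$ and the dispersion $\sqrt{a(s,\cdot)}$ is continuous and of linear growth as well. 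This places us exactly in the scope of the classical weak existence results for SDEs with continuous, linearly growing coefficients (Skorokhod's existence theorem, equivalently the Stroock--Varadhan martingale problem).

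Concretely, I would fix $X(t)=x(t)$ and, on some stochastic basis carrying a Brownian motion $W$, produce a continuous process $(X(s))_{s\in[t,T]}$ with
\[
dX(s)=b(s,X(s))\,ds+\sqrt{a(s,X(s))}\,dW(s),\qquad X(t)=x(t).
\]
Writing $\tilde P$ for the law of this solution on $C([t,T])$, I would then let $P\in\fP(\Omega)$ be the law of the concatenated path which coincides with the fixed history $x$ on $[0,t]$ and follows the solution on $[t,T]$; formally, $P$ is the pushforward of $\tilde P$ under the map prepending $x_t$. By construction $P(X_t=x_t)=1$, which is condition (i) of Definition~\ref{def:nlaffine}.

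It then remains to verify $P\in\psemac(t)$ and condition (ii). Under $P$ the shifted process $X^t$ is by construction an It\^o process, hence a continuous semimartingale whose finite-variation part is $\int_0^{\cdot}b(t+u,X(t+u))\,du$ and whose martingale part has quadratic variation $\int_0^{\cdot}a(t+u,X(t+u))\,du$; thus $X^t$ has absolutely continuous characteristics with differential characteristics $\beta^P(s)=b(s,X(s))$ and $\alpha(s)=a(s,X(s))$, giving $P\in\psemac(t)$ and the two identities claimed in the statement. Condition (ii) is then immediate from the definitions in \eqref{def:AB}: for $dP\otimes dt$-almost every $(\omega,s)$ one has $b(s,X(s))=b_0(s,X(s))+b_1(s,X(s))X(s)\in b^*(X(s))$ since $(b_0(s,X(s)),b_1(s,X(s)))\in B$, and likewise $a(s,X(s))\in a^*(X(s))$ since the coefficient values lie in $A\times\Gamma$.

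The routine parts are the growth bounds and the identification of characteristics; the genuine work lies in the weak existence step, and that is where I expect the main obstacle. Because the coefficients are only continuous (not Lipschitz) in $y$ and merely measurable in $s$, pathwise uniqueness is unavailable and one must argue by compactness: truncate $b$ and $a$ to bounded continuous coefficients, solve the corresponding martingale problem, extract uniform moment bounds from the linear growth to obtain tightness of the truncated laws, pass to a weak limit, and check that the limit solves the untruncated problem without explosion before $T$ (degeneracy of $\sqrt{a}$ causing no difficulty for weak solutions). A subtlety worth flagging is that Definition~\ref{def:nlaffine} refers to the raw right-continuous filtration; here the remark following the definition of $\psemac$, invoking Proposition 2.2 of \cite{NeufeldNutz2014}, ensures the semimartingale property is insensitive to this choice, so the constructed law indeed belongs to $\aset$.
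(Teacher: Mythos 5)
Your proposal is correct and follows essentially the same route as the paper: the paper's proof consists precisely of observing continuity and linear growth of $b$ and $a$ and invoking Skorokhod's classical weak existence theorem (Theorem III.2.4 in Gikhman--Skorokhod), leaving the concatenation with the history $x_t$ and the verification of conditions (i) and (ii) implicit. Your additional remarks on the compactness argument behind weak existence and on the raw right-continuous filtration are details the paper delegates to the cited references, not a different method.
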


\begin{proof}
Observe that by assumption $b$ and $a$ are jointly continuous and have at most linear growth in the sense of Equation III.2.4, p. 258 in \cite{GikhmanSkorokhoddIII}. Then existence of a weak solution follows by Skorokhod's classical existence result, see Theorem III.2.4, p. 265 in \cite{GikhmanSkorokhoddIII}.

\end{proof}

We denote by $\cX \subseteq \R$ the state space which we will consider. We either will be interested in the general case, where Gaussian processes are allowed (for example the Vasi\v cek model) and hence, the state space will be $\R$. Note that this does not mean that a CIR model will be excluded, see Example \ref{ex:VCIR}. Or, we will focus on non-negative values only, for example when considering stock prices and we will consider $\R_{>0}$ as state space.

More precisely, we focus on the cases where either
\begin{align}
\label{eq:parametermenge}
\begin{split}
\text{(i)}\ &\ubar{a}_0 > 0 \text{ and } \cX = \R,\\
\text{(ii)}\ & \ubar{\gamma}=\obar{\gamma}=1/2,\ \ubar{a}_0 = \obar{a}_0 = 0,\ \ubar{b}_0 \geq \frac{\obar{a}_1}{2} > 0 \text{ and } \cX = \R_{>0}, \text{ or } \\
\text{(iii)}\ & \frac{1}{2}< \ubar{\gamma} \leq \obar{\gamma} \leq 1,\ \ubar{a}_0 = \obar{a}_0 = 0,\ \ubar{b}_0  > 0, \ \ubar{a}_1>0  \text{ and } \cX = \R_{>0}.
\end{split}
\end{align}
In this regard, we call $\Theta$ \emph{proper} if either (i), (ii) or (iii) is satisfied. In these cases, 
\begin{align*}
P(X_T \in C([0,T],\cX)) = P(C([0,T],\cX)) = 1 
\end{align*}
for all $P \in \cA(t,x_t,\Theta)$ and for all $(t,x) \in [0,T] \times C([0,T],\cX)$.
This trivially holds true in the case (i) and in the case (ii) -- see \cite[Proposition 2.3]{FadinaNeufeldSchmidt2019} for a proof of the latter claim. The case (iii)  was proven in \cite[Lemma A.1]{lutkebohmert2021robust}.

The following example illustrates the additional flexibility gained by incorporating parameter uncertainty. 
\begin{example}[The Vasi\v cek-CIR model]\label{ex:VCIR}
It is surprising that in the non-linear setting one does not need to choose between a Vasi\v cek and a CIR model. Indeed, consider the case (i) of Equation \eqref{eq:parametermenge} and assume $\ubar a_0 > 0$ and $\cX = \R$. Moreover, we assume also that $\ubar a_1>0$ and hence this model is able to interpolate between Vasi\v cek and CIR. Depending on $\{X  {(t)} <0\}$ or $\{X{(t)} > 0\}$, the model behaves more like the earlier or the latter. For practical purposes one typically would start from confidence intervals on the mean-reversion speed and mean-reversion level and obtain respective intervals for $b_0$ and $b_1$. Note that it is also possible to consider time-dependent parameters (a Hull-White extension). \end{example}

\section{Dynamic Programming} \label{sec:DynamicProgramming}

In this section we consider the valuation problem and establish a dynamic programming principle relying on Theorem 2.1 in \cite{ElKarouiTan2015}. We consider a payoff which is given as a measurable, bounded and possibly path-dependent function 
\begin{align} \label{eq:PayoffFunction}
\Phi : C([0,T],\cX) \to \R.
\end{align}
{Unbounded payoffs can be  approximated by choosing a suitable large bound. Together with subadditivity and some regularity of the payoff function this allows to obtain an arbitrarily close approximation. 
}

In the paradigm of Knightian uncertainty, the value {function} at time $t$, given the observed path $x_t$ until $t$, is specified in terms of the supremum over all possible expectations {by}, 
{\begin{align}\label{eq:V}
V_t: C([0,t],\cX) \to \R, \qquad V_t(x_t) := \sup_{P \in \mathcal{A}(t,x_t,\Theta)}E_P\left[\Phi(X_T)\right].
\end{align}
Moreover, we define the function $V:[0,T] \times C([0,T],\cX) \to \R$ by
\begin{align} \label{eq:V1}
	V(t,x):=V_t(x_t),
\end{align}
where $x_t$ is understood as the restriction of the path ${x}$ to $[0,t]$.
}

{In the remaining part we will work  under the following technical assumption. 
\begin{assumption} \label{AssumptionStanding}
	Let the set $\Theta$ in \eqref{Theta} be proper and the payoff function $\Phi$ in \eqref{eq:PayoffFunction} be measurable and bounded. 
\end{assumption}}

Before we prove the dynamic programming principle, we first establish measurability of the value function. 

\begin{proposition}\label{prop:DPPa}
    {Let Assumption \ref{AssumptionStanding} hold}. Then, the value function $V:[0,T]\times C([0,T],\cX) \to \R$ is upper semianalytic.
\end{proposition}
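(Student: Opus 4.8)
The plan is to realise $V$ as a partial supremum of an upper semianalytic function and then invoke the measurable projection theorem for upper semianalytic functions (in the spirit of \cite{NeufeldNutz2014}, ultimately resting on the theory of analytic sets). Recall that a function is upper semianalytic if all of its strict superlevel sets are analytic, that Borel functions are upper semianalytic, and that $\sup_{y}f(\cdot,y)$ is upper semianalytic whenever $f$ is. Concretely, I would introduce the graph
\[
\mathcal{G} := \bigl\{(t,x,P) \in [0,T]\times C([0,T],\cX)\times \fP(\Omega) : P \in \aset \bigr\}
\]
together with the map $g(t,x,P) := E_P[\Phi(X_T)]=\int_\Omega \Phi\,dP$, and set $\tilde g := g$ on $\mathcal{G}$ and $\tilde g := -\infty$ off $\mathcal{G}$. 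Since $X_T(\omega)=\omega$, the payoff $\Phi(X_T)$ is just the bounded Borel function $\Phi$ on $\Omega$, so $g$ depends only on $P$ and is Borel on $\fP(\Omega)$ by the standard measurability of $P\mapsto\int f\,dP$ for bounded Borel $f$ under the weak topology. Hence, once $\mathcal{G}$ is shown to be analytic (in fact Borel), $\tilde g$ is upper semianalytic and $V(t,x)=\sup_{P}\tilde g(t,x,P)$ is upper semianalytic by the projection theorem. The whole difficulty is therefore concentrated in the measurability of $\mathcal{G}$.

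Constraint (i) is the easy part. I would set $D:=\{(t,x,\omega) : \omega(s)=x(s)\ \forall s\in[0,t]\}$, which is a closed, hence Borel, subset of $[0,T]\times C([0,T],\cX)\times\Omega$; by joint measurability of sections the map $(t,x,P)\mapsto P(D_{(t,x)})=P(X_t=x_t)$ is Borel, so $\{(t,x,P): P(X_t=x_t)=1\}$ is Borel.

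The main obstacle is constraint (ii) together with the requirement $P\in\psemac(t)$, and this is where I would lean on \cite{NeufeldNutz2014}: they establish that the set of semimartingale laws is Borel in $\fP(\Omega)$ and, crucially, that the differential characteristics admit versions $\beta^P(\omega,s)$ and $\alpha(\omega,s)$ that are jointly Borel measurable in $(\omega,s,P)$. Because the characteristics are independent of $t$ in the sense noted above, this furnishes a single jointly measurable version usable for every $t$, and the membership $P\in\psemac(t)$ becomes jointly Borel in $(t,P)$. The targets $b^*(x)$ and $a^*(x)$ in \eqref{def:AB} are compact intervals depending measurably on $x$, so $\{(\omega,s,P): \beta^P(\omega,s)\in b^*(X(\omega,s))\ \text{and}\ \alpha(\omega,s)\in a^*(X(\omega,s))\}$ is Borel; integrating its complement and using Fubini, the map
\[
(t,P)\longmapsto \int_t^T\!\!\int_\Omega \Ind\bigl\{\beta^P(\omega,s)\notin b^*(X(\omega,s))\ \text{or}\ \alpha(\omega,s)\notin a^*(X(\omega,s))\bigr\}\,P(d\omega)\,ds
\]
is Borel (in particular continuous in the upper limit $t$), and constraint (ii) is precisely the Borel condition that this map vanishes.

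Intersecting the jointly Borel membership $P\in\psemac(t)$ with the Borel sets coming from (i) and (ii) then shows that $\mathcal{G}$ is Borel, hence analytic. Applying the projection theorem to the upper semianalytic function $\tilde g$ yields that $V$ is upper semianalytic, which completes the argument. This parallels the non-path-dependent measurability result of \cite{FadinaNeufeldSchmidt2019}, the genuinely new ingredients being the uniform-in-$t$ treatment of the history constraint (i) and of the integration domain $(t,T]$ in (ii), for which the $t$-independence of the characteristics is exactly what makes the joint measurability go through.
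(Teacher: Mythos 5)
Your proposal is correct and follows essentially the same route as the paper: both realise $V$ as a partial supremum over the analytic graph $\{(P,t,x): P\in\cA(t,x_t,\Theta)\}$ of the (upper semianalytic) map $(P,t,x)\mapsto E_P[\Phi(X_T)]$ and conclude via the projection theorem for upper semianalytic functions \cite[Proposition 7.47]{BertsekasShreve1978}. The only difference is that you spell out the measurability of the constraint set via the jointly measurable versions of the differential characteristics from \cite{NeufeldNutz2014}, whereas the paper delegates this step to the arguments of Section 1 in \cite{criens2022non}.
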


\begin{proof}
We consider the Polish spaces $E = [0,T] \times \Omega$ and $F = \mathfrak{P}(\Omega)$. By similar arguments as in Section 1 in \cite{criens2022non}, it can be shown that
\begin{align*}
\ccA := \{(P,t,x) \in \mathfrak{P}(\Omega) \times [0,T] \times \Omega : P \in \cA(t,x_t,\Theta)\} \subseteq F \times E
\end{align*}
is measurable and hence analytic. It follows now from \cite[Prop{osition} 7.25]{BertsekasShreve1978} with monotone class arguments (see also \cite[Lemma 3.1]{NeufeldNutz2014}), that 
\begin{align*}
    f: (P,t,x) \mapsto E_P[ \Phi(X_T) ] 
\end{align*}
is measurable. The restriction to $\ccA$ is upper semianalytic since  $\{f > c\} \cap \ccA  \in \ccB(F \times E)$. 

The next step is to apply \cite[Prop{osition} 7.47]{BertsekasShreve1978}. In this regard, we compute 
\begin{align*}
\proj_E(\ccA) &= \{(t,x) \in E : \cA(t,x_t,\Theta) \ne \varnothing\}
= E
\end{align*}
by Proposition \ref{prop:existenz_affiner_prozesse}. The mentioned result gives that the function
\begin{align*}
f^*: (t,x)\mapsto \sup_{P : (P,t,x) \in \ccA}f(P,t,x) 
\end{align*}
is upper semianalytic and so is its restriction to $[0,T] \times C([0,T],\cX)$. 
\end{proof}

Finally, we have all ingredients to prove the \emph{dynamic programming principle}.

\begin{theorem}
\label{theorem:dynamic_programming}
{Let Assumption \ref{AssumptionStanding} hold}.
For all $t \in [0,T]$, $x \in C([0,T],\cX)$  and every stopping time $\tau \subseteq [t,T]$,
\begin{align}\label{eq:DPP}
V(t,x) = \sup_{P \in \cA(t,x_t,\Theta)}E_P\left[V(\tau,X) \right].
\end{align}
\end{theorem}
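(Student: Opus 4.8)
The plan is to deduce the dynamic programming principle from the abstract result \cite[Theorem 2.1]{ElKarouiTan2015}, which applies to any family of measures indexed by initial data $(t,x)$ once three structural conditions are verified: upper semianalyticity of the value function (equivalently, analyticity of the graph together with measurability of $(P,t,x)\mapsto E_P[\Phi(X_T)]$), stability of the family $\{\cA(t,x_t,\Theta)\}$ under conditioning, and stability under pasting. The first condition is already in hand: the graph $\ccA$ was shown to be analytic and $V$ upper semianalytic in Proposition \ref{prop:DPPa}, which is precisely what guarantees that the supremum in \eqref{eq:DPP} and the integrals $E_P[V(\tau,X)]$ are well defined (the integrand $\omega\mapsto V(\tau(\omega),X(\omega))$ being universally measurable by upper semianalyticity, hence $P$-integrable). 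Thus the real content lies in the conditioning and pasting properties, and this is where the affine and semimartingale structure enters.

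For \emph{stability under conditioning}, fix $P\in\cA(t,x_t,\Theta)$ and a stopping time $\tau$ valued in $[t,T]$, and let $(P_\omega)_\omega$ be a family of regular conditional probabilities given $\ccF_\tau$. The goal is to show $P_\omega\in\cA(\tau(\omega),X_{\tau(\omega)}(\omega),\Theta)$ for $P$-a.e.\ $\omega$. Condition (i) of Definition \ref{def:nlaffine} is immediate, since $P_\omega(X_{\tau(\omega)}=X_{\tau(\omega)}(\omega))=1$ by construction of the regular conditional probability. For condition (ii), I would exploit that the semimartingale property and the differential characteristics are local, pathwise-stable objects: by the results of \cite{NeufeldNutz2014} and their refinement in \cite{criens2022non} for the affine setting, conditioning a semimartingale law on $\ccF_\tau$ yields, for $P$-a.e.\ $\omega$, again a semimartingale law with absolutely continuous characteristics, whose differential characteristics coincide $dP_\omega\otimes dt$-a.e.\ with those of $P$ on the remaining interval. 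Since $P$ respects the inclusions $\beta^P(s)\in b^*(X(s))$ and $\alpha(s)\in a^*(X(s))$, so does $P_\omega$, giving $P_\omega\in\cA(\tau(\omega),\cdot,\Theta)$.

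For \emph{stability under pasting}, I would start from $P\in\cA(t,x_t,\Theta)$ together with a universally measurable selector $\omega\mapsto Q_\omega\in\cA(\tau(\omega),X_{\tau(\omega)}(\omega),\Theta)$ and form the pasted measure $\bar P:=P\otimes_\tau Q_\cdot$, which agrees with $P$ on $\ccF_\tau$ and uses $Q_\omega$ as the conditional law after $\tau$. The claim is $\bar P\in\cA(t,x_t,\Theta)$. Condition (i) is inherited from $P$ since $\bar P=P$ on $\ccF_\tau\supseteq\ccF_t$. For condition (ii), the characteristics of $\bar P$ coincide with those of $P$ before $\tau$ and with those of $Q_\omega$ after $\tau$; as both families satisfy the inclusions into $b^*(X)$ and $a^*(X)$, the pasted law does too, $d\bar P\otimes dt$-a.e. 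That concatenation of laws in $\fP_{\sem}^{\ac}$ again lies in $\fP_{\sem}^{\ac}$ is once more taken from \cite{NeufeldNutz2014} and \cite{criens2022non}.

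The hard part will be the measure-theoretic bookkeeping in the conditioning and pasting steps, rather than the affine structure itself. One must verify that the differential characteristics are preserved $dP\otimes dt$-a.e.\ under disintegration and concatenation despite working with the \emph{raw} (merely right-continuous) filtration and the $P$-dependence of $\beta^P$; this is exactly where the equivalence of the semimartingale property for the raw and the right-continuous filtration (Proposition 2.2 in \cite{NeufeldNutz2014}, invoked earlier in the setup) and a careful null-set analysis are needed. Once conditioning and pasting are established, the two inequalities of \eqref{eq:DPP} follow mechanically from \cite[Theorem 2.1]{ElKarouiTan2015}: the ``$\le$'' direction from conditioning combined with a measurable selection of near-optimal measures, and the ``$\ge$'' direction from pasting near-optimal continuations onto an arbitrary $P\in\cA(t,x_t,\Theta)$.
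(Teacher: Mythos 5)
Your proposal follows essentially the same route as the paper: the authors likewise reduce the statement to \cite[Theorem 2.1]{ElKarouiTan2015}, with the measurability input supplied by Proposition \ref{prop:DPPa} and the conditioning and pasting properties obtained by extending the arguments of \cite{NeufeldNutz2017} and \cite{criens2022non}. In fact your sketch spells out the verification of stability under conditioning and pasting in more detail than the paper's own proof, which deliberately omits it.
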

{\begin{proof}For proving the dynamic programming principle we apply Theorem 2.1 in \cite{ElKarouiTan2015}. This can be directly done by extending the arguments in {Theorem 2.1 in }\cite{NeufeldNutz2017} or in {Theorem 3.1 in }\cite{criens2022non} to our setting. Since the core of this paper is on presenting the functional Kolmogorov equation, we skip a detailed presentation of the arguments.\end{proof}}

\section{The non-linear Kolmogorov equation}
\label{subsec:kolmogorovgleichung}

The main result of this paper is  a functional version of the non-linear Kolmogorov equation. This extends \cite{FadinaNeufeldSchmidt2019} to the setting with path-dependence. Note that the non-linear conditional expectation given in Equation \eqref{eq:DPP} is highly intractable -- it is not easily reachable by Monte-Carlo simulation since one has to simulate over a variety of models. The non-linear Kolmogorov equation is therefore the central tool for the relatively high degree of numerical tractability of generalized affine models under uncertainty.  

Mutatis mutandis we obtain the following lemma from \cite[Lemma A.{4}]{lutkebohmert2021robust}. 
\begin{lemma}
\label{lemma:supremum_abschÃ¤tzung}
Consider proper $\Theta$ and $q \ge 1$. Then, there exists  ${0 < \epsilon = \epsilon(q) < 1}$, such that for all  $0 < h \leq \epsilon$, all $t \in [0,T-h]$ and all $x \in C([0,T],\cX)$ it holds that
\begin{align*}
\sup_{P \in \aset} E_P \left[ \sup_{0 \leq s \leq h} |X(t+s) - x(t)|^q \right] \leq C(h^q+h^{q/2})
\end{align*}
with a constant 
$C = C(x(t),q) > 0$, being independent of $h$ and depending on $t$ only via $x(t)$. 
\end{lemma}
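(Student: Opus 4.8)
The plan is to derive a self-improving (absorption) moment inequality for the running maximum of the increment, by combining the Burkholder--Davis--Gundy (BDG) inequality with the uniform linear-growth bounds on the characteristics that are hard-wired into Definition~\ref{def:nlaffine}, and then to close it by an absorption step for which the admissible step size $\epsilon$ must be chosen small in terms of $q$.

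First I would fix an arbitrary $P \in \aset$ and set $G(s) := \sup_{0\le r\le s}|X(t+r)-x(t)|$. Since $P(X_t=x_t)=1$ we have $X(t)=x(t)$ $P$-a.s., so the semimartingale decomposition of $X^t$ reads $X(t+s)-x(t)=M^{t,P}(s)+B^{t,P}(s)$ with $B^{t,P}(s)=\int_0^s\beta^{t,P}(u)\,du$ and $\langle M^{t,P}\rangle(s)=\int_0^s\alpha^t(u)\,du$. Condition (ii) of Definition~\ref{def:nlaffine} forces $\beta^{t,P}(u)\in b^*(X(t+u))$ and $\alpha^t(u)\in a^*(X(t+u))$ for $dP\otimes dt$-a.e.\ $(\omega,u)$, and because the intervals $B,A,\Gamma$ and the exponent bound $2\gamma\le 2$ are the \emph{same} for every law, there are constants $C_1,C_2$ depending only on $\Theta$ with $|\beta^{t,P}(u)|\le C_1(1+|X(t+u)|)$ and $\alpha^t(u)\le C_2(1+|X(t+u)|^2)$; here one uses $z^{2\gamma}\le 1+z^2$ for $z\ge 0$. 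Inserting $|X(t+u)|\le |x(t)|+G(u)$ turns both bounds into expressions involving $x(t)$ and the running maximum $G$ only.

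Next, to keep every expectation finite I would localize with $\tau_n:=\inf\{s:|X(t+s)-x(t)|>n\}$, so that the stopped running maximum is bounded by $n$, deriving all estimates uniformly in $n$ and removing the localization by monotone convergence at the very end. For the finite-variation part, $\sup_{s\le h}|B^{t,P}(s)|^q\le(\int_0^h|\beta^{t,P}(u)|\,du)^q$, and using $\int_0^h G(u)\,du\le h\,G(h)$ together with $(a+b)^q\le 2^{q-1}(a^q+b^q)$ gives a bound of the form $C(q)[(1+|x(t)|)^q h^q+h^q G(h)^q]$. For the martingale part, BDG yields $E_P[\sup_{s\le h}|M^{t,P}(s)|^q]\le c_q\,E_P[(\int_0^h\alpha^t\,du)^{q/2}]$, and bounding $\int_0^h G(u)^2\,du\le h\,G(h)^2$ and splitting the power $q/2$ (by convexity when $q\ge 2$, by subadditivity when $1\le q<2$) produces $C(q)[(1+|x(t)|^2)^{q/2}h^{q/2}+h^{q/2}E_P[G(h)^q]]$.

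Combining these with $G(h)^q\le 2^{q-1}(\sup_{s\le h}|M^{t,P}(s)|^q+\sup_{s\le h}|B^{t,P}(s)|^q)$ yields, for the localized quantity, an inequality of the shape
\[
E_P[G(h\wedge\tau_n)^q]\le D_1(x(t),q)\,(h^q+h^{q/2})+D_2(q)\,(h^q+h^{q/2})\,E_P[G(h\wedge\tau_n)^q],
\]
where $D_1,D_2$ depend only on $\Theta,q$ and, for $D_1$, on $x(t)$. I would then choose $\epsilon=\epsilon(q)\in(0,1)$ so small that $D_2(q)(\epsilon^q+\epsilon^{q/2})\le\tfrac12$; for every $h\le\epsilon$ the last term can be absorbed into the left-hand side (this is exactly where the finiteness from the localization is indispensable), leaving $E_P[G(h\wedge\tau_n)^q]\le 2D_1(x(t),q)(h^q+h^{q/2})$ uniformly in $n$ and in $P$. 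Letting $n\to\infty$ by monotone convergence and taking the supremum over $P\in\aset$ gives the claim with $C=2D_1(x(t),q)$, which depends on $t$ only through $x(t)$. The two points demanding care are the uniformity over the \emph{whole} family $\aset$, which rests entirely on the state-dependent but law-independent characteristic bounds of Definition~\ref{def:nlaffine}, and the legitimacy of the absorption step; the restriction $h\le\epsilon(q)$ serves precisely to make the absorbed coefficient strictly less than one.
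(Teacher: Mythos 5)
Your proof is correct and follows essentially the same route as the argument the paper relies on: the paper gives no proof of its own but imports Lemma A.4 of the cited reference ``mutatis mutandis,'' and that argument is precisely this combination of the law-independent linear-growth bounds on $(\beta^P,\alpha)$ coming from $b^*$, $a^*$ and $2\gamma\le 2$, the Burkholder--Davis--Gundy inequality for the martingale part, localization, and absorption of the $E_P[G(h)^q]$ term for $h\le\epsilon(q)$. Your identification of where the restriction $h\le\epsilon(q)$ enters (making the absorbed coefficient less than one, with $D_2$ depending only on $\Theta$ and $q$) and of why the bound is uniform over $\aset$ matches the intended reasoning.
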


\subsection{Functional derivatives} \label{subsection:FunctionalDerivatives}
In the following, we introduce some notations and definitions about functional derivatives which we use later. For more details on this topic we refer to \cite{ContFournie2013}. {From now on, $D([0,T],\cX)$ denotes the space of c\`{a}dl\`{a}g functions from $[0,T]$ to $\cX$. On $[0,T] \times \cX$ we define the distance $d$ by
\begin{align} \label{eq:PseudoNorm}
d((t,x), (t',x')):=\vert t-t' \vert + \sup_{u \in [0,T]} \vert x_{t}(u) - x'_{t'}(u) \vert.
\end{align}}

For a path $x \in D([0,T],\cX)$, $0\leq t<T$ and $0\leq  h \leq T-t $ we denote by $x_{t,h}\in D([0,t+h],\cX)$ the \emph{horizontal extension} of $x_t$ to $[0,t+h]$, i.e. 
\begin{equation*}
    x_{t,h}(s):=x(s) \Ind_{[0,t]}(s)+ x(t) \Ind_{(t,t+h]}(s).
\end{equation*} 
Moreover, for $h \in \cX$ the \emph{vertical perturbation}  $x_{t}^h$ of $x_t$ is obtained by shifting the endpoint $x(t)$ with $h$, i.e.
\begin{equation*}
    x_{t}^h(s):=x_t(s) \Ind_{[0,t)}(s)+ (x(t)+h) \Ind_{\lbrace t \rbrace}(s).
\end{equation*}

The \emph{horizontal derivative} at $(t,x) \in [0,T) \times D([0,T],\cX)$ of the non-anticipative functional $F$ 
is defined as
$$
\cD_tF(x_t):= \lim_{h \searrow 0} \frac1h (F_{t+h}(x_{t,h}) - F_t(x_t)),
$$
if the limit exists. Moreover, we define 
the \emph{vertical derivative} at $(t,x) \in [0,T] \times D([0,T],\cX)$ of $F$  as
$$
\nabla_xF_t(x_t):= \lim_{h \searrow 0} \frac1h (F_{t}(x_{t}^{h}) - F_t(x_t)).
$$
\subsubsection*{Function classes} Next, we introduce some useful classes which we will utilize later on. 
First, we define the class of \emph{left-continuous} non-anticipative functionals $\mathbb{F}_l^{\infty}$, i.e.\ those functionals $F=(F_t)_{t \in [0,T)}$ which satisfy 
\begin{align*}
    \forall t \in [0,T], \ \forall \epsilon >0, \ \forall x \in D([0,t],\cX), \ \exists \eta >0, \ \forall h \in [0,t], \ \forall x' \in D([0,t-h],\cX) \\
    h + \sup_{s \in [0,t]} \vert x(s)-x'_{t-h,h}(s)\vert < \eta \quad \Rightarrow \quad  \vert F_t(x)-F_{t-h}(x')\vert < \epsilon.
\end{align*}
By $\mathbb{B}$ we denote the set of \emph{boundedness-preserving} functionals. These are those  non-anticipative functionals $F$ such that for any compact subset $K \subset \cX $ and any $R>0$, there exists a constant $C=C(K,R)>0$ such that
\begin{align*}
    \forall t \in [0,T], \ \forall x \in D([0,T],K): \sup_{s \in [0,t]} \vert x(s) \vert < R \quad \Rightarrow \quad \vert F_t(x) \vert < C.
\end{align*}

We denote by $\mathbb{C}_b^{1,2}([0,T))$ the class of non-anticipative functionals $F=(F_t)_{t \in [0,T]}$ for which 
\begin{enumerate}
    \item one horizontal derivative of $F$ exists at all $(t,x) \in [0,T) \times D([0,T],\cX)$,
    \item two vertical derivatives of $F$ exist at all $(t,x) \in [0,T) \times D([0,T],\cX)$,
    \item $F, \cD F, \nabla F, \nabla^2 F$ {are continuous at every point in $[0,T] \times D([0,T],\cX)$ with respect to the distance $d$ in \eqref{eq:PseudoNorm}},
    \item $F, \nabla F, \nabla^2 F$ are elements in $\mathbb{F}_l^{\infty}$.
     \item $ \cD F, \nabla F, \nabla^2 F$ are elements in $\mathbb{B}$.
\end{enumerate}

A non-anticipative functional  $F = (F_t)_{t \in [0,T)}$ is called \emph{Lipschitz continuous}, 
    if for all  $x,y \in D([0,T],\cX)$ and $t \in [0,T)$ there exists a constant $L > 0$  such that for all  $h \in [0,T-t)$,
\begin{align}
|F_{t+h}(x_{t+h}) - F_t(y_t)| \leq L \Big(h + \sup_{s \in [0,h]}|x(t+s)-y({t})|\Big).    \label{def:pfadweise_lipschitzstetig}
\end{align}

We denote by $\Lip_b^{1,2}([0,T))$ the class of non-anticipative functionals $F \in \mathbb{C}_b^{1,2}([0,T))$, whose derivatives are Lipschitz continuous in the sense of \eqref{def:pfadweise_lipschitzstetig} such that $\nabla_x F$ is uniformly bounded. It is also possible to let the Lipschitz constant depend on $x_t,y_t$.

\subsection{Viscosity solutions of PPDE}
In this section we consider viscosity solutions of the path-dependent Kolmogorov equation. For theoretical background, we refer to Section 11 of \cite{zhang2017backward}.
{Now we introduce the path-dependent \emph{non-linear Kolmogorov equation}, we need to study. This equation will specify the evolution of the value functions and in the case without uncertainty turns out to be the classical Kolmogorov equation. To this end, define
\begin{align} \label{eq:DefinitionNonlinearity}
G(y,p,q) := \sup_{(b_0,b_1,a_0,a_1, \gamma) \in \Theta} \Big((b_0 + b_1y)p + \Big(\frac12 (a_0 + a_1y^+)\Big)^{2\gamma}q \Big).
\end{align}
Consider a non-anticipative functional  $F = (F_t)_{t \in [0,T]}$, i.e.  $F(x)=F_t(x_{\cdot \wedge t})$ for all $x \in D([0,T],\cX)$ and $t \in [0,T)$,
and \begin{align}
\label{eq:kolmogorov}
\begin{split}
-\cD_tF(x_t) - G\left(x(t),\nabla_xF_t(x_t),\nabla_x^2F_t(x_t)\right) &= 0 \quad \text{on}\ [0,T) \times C([0,T],\cX),\\
F_T(x) &= \Phi(x), \quad  x \in \cX.
\end{split}
\end{align}}
 We choose the set of test functions to be properly differentiable, Lipschitz continuous functions, i.e. $\varphi \in \Lip_b^{1,2}([0,T))$.
This weakens the stronger requirement of differentiability ($C^{2,3}$) in \cite{FadinaNeufeldSchmidt2019} and \cite{lutkebohmert2021robust}.

Consider $F = (F_t)_{t \in [0,T]}$ with $F_t : C([0,t],\cX) \to \R$ for all $t \in [0,T]$. Then $F$ is called \emph{{weak sense} viscosity subsolution} of the Kolmogorov equation \eqref{eq:kolmogorov}, if $F_T(\cdot) \leq \Phi(\cdot)$, and for all  $(t,x) \in [0,T) \times C([0,T],\cX)$ and all  $\varphi \in \Lip_b^{1,2}([0,T))$ with  $\varphi_t(x_t) = F_t(x_t)$  and $\varphi_s(y_s) \geq F_s(y_s)$,  $(s,y) \in [0,T) \times C([0,T],\cX)$, it holds that:
\begin{align*}
-\cD_t\varphi(x_t) - G\left(x(t),\nabla_x\varphi_t(x_t),\nabla_x^2\varphi_t(x_t)\right) \leq 0.
\end{align*}
The definition of a \emph{{weak sense} viscosity supersolution} is obtained by reversing the inequalities. $F$ is called \emph{{weak sense} viscosity solution} if it is both a viscosity super- and a viscosity subsolution. 
{
\begin{remark}[Continuity and Uniqueness]
	Note that the solution $F$ in	
	\eqref{eq:kolmogorov} depends on the time and on the path. For such solutions one considers weak sense viscosity solutions, see e.g. \cite{criens2022non}, which a priori do not need to be continuous. Corollary 4.10 in \cite{criens2022non} provides sufficient conditions for the continuity of the solution of a path-dependent PDE, which are however not satisfied in our setting except for the L\'evy case, i.e. $\underline{a}_1=\overline{a}^1=0, \ \underline{b}_1=\overline{b}^1=0$  and $\underline{\gamma}=\overline{\gamma}=1/2$.
	
	Uniqueness of  the solution of \eqref{eq:kolmogorov} is more subtle. 
	To the best of our knowledge,  the only available results are very recent, dealing with path-dependent Hamilton-Jacobi-Bellman equations, see \cite{cosso_gozzi_rosestolatio_russo_2023}, \cite{zhou_2023} and \cite{criens2022non}. These results require additional global Lischpitz continuity on the coefficients and on the boundary condition and deliver uniqueness among all solutions with another certain Lipschitz  continuity.  
\end{remark}
}
\begin{theorem}
\label{thm:kolmogorov}
{If Assumption \ref{AssumptionStanding} holds,} the value function $V$ from Equation \eqref{eq:V} is a {weak sense} viscosity solution of the Kolmogorov equation \eqref{eq:kolmogorov}.
\end{theorem}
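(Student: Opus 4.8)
The plan is to establish the sub- and supersolution inequalities separately, in both cases arguing by contradiction and combining three ingredients: the dynamic programming principle of Theorem~\ref{theorem:dynamic_programming}, the functional It\^o formula of \cite{ContFournie2013}, and the uniform moment estimate of Lemma~\ref{lemma:supremum_abschÃ¤tzung}. The terminal conditions $F_T\le\Phi$ and $F_T\ge\Phi$ are immediate from \eqref{eq:V}. A convenient feature is that the touching conditions in the definition of a weak sense viscosity solution are \emph{global} ($\varphi\ge V$, resp.\ $\varphi\le V$, on all of $[0,T)\times C([0,T],\cX)$), so the ordering between $\varphi$ and $V$ transfers directly to the terminal evaluation inside the DPP without any further localisation of the test function.

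For the subsolution property, let $\varphi\in\Lip_b^{1,2}([0,T))$ touch $V$ from above at $(t,x)$ and suppose, toward a contradiction, that $-\cD_t\varphi(x_t)-G(x(t),\nabla_x\varphi_t(x_t),\nabla_x^2\varphi_t(x_t))=\delta>0$. Applying the DPP with the deterministic time $\tau=t+h$ and using $\varphi\ge V$ gives
\begin{align*}
\varphi_t(x_t)=V_t(x_t)=\sup_{P\in\aset}E_P[V(\tau,X)]\le\sup_{P\in\aset}E_P[\varphi_{t+h}(X_{t+h})].
\end{align*}
For each admissible $P$ I would expand $\varphi$ along $X$ with the functional It\^o formula, writing the increment as $\int_t^{t+h}\big(\cD_s\varphi+\beta^P(s)\nabla_x\varphi+\tfrac12\alpha(s)\nabla_x^2\varphi\big)\,ds+\int_t^{t+h}\nabla_x\varphi\,dM^P$; the stochastic integral has vanishing expectation after localisation, the required integrability coming from the membership of the derivatives in $\mathbb{B}$ and $\mathbb{F}_l^\infty$ together with Lemma~\ref{lemma:supremum_abschÃ¤tzung}. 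Since $\beta^P(s)\in b^*(X(s))$ and $\alpha(s)\in a^*(X(s))$, the drift integrand is pointwise dominated by $\cD_s\varphi+G(X(s),\nabla_x\varphi,\nabla_x^2\varphi)$ by the very definition of $G$. Continuity of $\cD\varphi,\nabla\varphi,\nabla^2\varphi$ in the metric $d$, combined with the moment estimate to discard paths leaving a neighbourhood of $x(t)$, then bounds $E_P$ of the drift integral by $h(-\delta+\rho(h))$ with $\rho(h)\to0$ as $h\searrow0$, uniformly in $P$. For $h$ small this forces $\sup_P E_P[\varphi_{t+h}(X_{t+h})]\le\varphi_t(x_t)-h\delta/2<V_t(x_t)$, the desired contradiction.

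The supersolution property is symmetric but now uses that the supremum defining $G$ is attained. If $\varphi\le V$ touches at $(t,x)$ and $\cD_t\varphi(x_t)+G(x(t),\nabla_x\varphi_t(x_t),\nabla_x^2\varphi_t(x_t))=\delta>0$, I would select a maximiser $(b_0,b_1,a_0,a_1,\gamma)\in\Theta$ of $G(x(t),\cdot)$ and, via Proposition~\ref{prop:existenz_affiner_prozesse}, build an admissible law $P^\ast\in\aset$ whose differential characteristics track this maximiser near $(t,x)$. Along $P^\ast$ the drift integrand is close to $\delta>0$, so the same It\^o expansion yields $E_{P^\ast}[\varphi_{t+h}(X_{t+h})]\ge\varphi_t(x_t)+h\delta/2$ for $h$ small; on the other hand the DPP and $\varphi\le V$ give $V_t(x_t)=\sup_P E_P[V(\tau,X)]\ge E_{P^\ast}[\varphi_{t+h}(X_{t+h})]$, again a contradiction.

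I expect the main obstacle to be the passage from the pointwise continuity of the functional derivatives (in the $d$-metric) to estimates that are \emph{uniform} over the whole family $\aset$: this is precisely where Lemma~\ref{lemma:supremum_abschÃ¤tzung} enters, both to show that paths with $\sup_{0\le s\le h}|X(t+s)-x(t)|$ large contribute negligibly and to justify that the stochastic integral is a genuine martingale under each $P$. A second, more structural subtlety arises in the supersolution step, where one must exhibit an \emph{admissible} measure realising the supremum in $G$ rather than merely some semimartingale with the prescribed local characteristics; here the closedness of $\Theta$ and Proposition~\ref{prop:existenz_affiner_prozesse} are essential.
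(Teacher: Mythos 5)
Your proposal is correct and follows essentially the same route as the paper: dynamic programming with the deterministic time $t+h$, the functional It\^o formula, the uniform moment bound of Lemma~\ref{lemma:supremum_abschÃ¤tzung} to control the drift terms uniformly over $\aset$, and Proposition~\ref{prop:existenz_affiner_prozesse} to produce admissible laws in the supersolution step. The only (cosmetic) differences are that the paper argues directly by dividing by $u$ and letting $u\searrow 0$ rather than by contradiction, and in the supersolution step it proves the inequality for every fixed $\theta\in\Theta$ and takes the supremum afterwards instead of selecting a maximiser of $G$; it also gets the uniformity in $P$ you worry about for free from the Lipschitz continuity of the derivatives built into the class $\Lip_b^{1,2}([0,T))$ rather than from mere $d$-continuity.
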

\begin{proof}
We remark that in the subsequent lines within this proof, $C > 0$ is a constant whose value may change from line to line. Moreover, we define the  constant
\begin{align}
\label{eq:parametermenge_konstante}
\cK := |\ubar{b}^0| + |\obar{b}^0| + |\ubar{b}^1| + |\obar{b}^1| + \obar{a}^0 + \obar{a}^1 + 1. 
\end{align}

\emph{{Weak sense} subviscosity:} consider $(t,x) \in [0,T) \times C([0,T],\cX)$ and $\varphi \in \Lip_b^{{1,2}}([0,T))$ with $\varphi_t(x_t) = V_t(x_t)$ and $\varphi_s(y_s) \geq V_s(y_s)$ for all $(s,y) \in [0,T) \times C([0,T],\cX)$. 
The dynamic programming principle, Theorem \ref{theorem:dynamic_programming}, yields that for $u \in (0,T-t)$, 
\begin{align*}
0 = \sup_{P \in \cA(t,x_t,\Theta)} E_P[V_{t+u}(X_{t+u}) - V_t(x_t)] \leq \sup_{P \in \cA(t,x_t,\Theta)}E_P[\varphi_{t+u}(X_{t+u}) - \varphi_t(x_t)].
\end{align*}
The expectation is well-defined since with $\Phi$ also $V$ is bounded and $\varphi \in \Lip_b^{{1,2}}([0,T))$.

Fix any  $P \in \aset$ and denote as above by $(\beta^P,\alpha)$ the differential characteristics of $X$ under $P$. 
By the functional It\^o-formula, see \cite[Th{eorem} 4.1] {ContFournie2013}, we obtain 
\begin{align}
\label{eq:kolmogorov_0}
\varphi_{t+u}(X_{t+u}) - \varphi_t(X_t)
    &= \int_0^u \cD_t \varphi_{t+s} (X_{t+s}) ds + \int_0^u \nabla_x \varphi_{t+s}(X_{t+s}) dM(t+s) \notag\\
    & \quad + \int_0^u \nabla_x \varphi_{t+s}(X_{t+s}) \beta^P(t+s) ds + \frac12 \int_0^u \nabla_x^2 \varphi_{t+s}(X_{t+s}) \alpha(t+s) ds,
\end{align}
$P$-a.s., where the stochastic integral is w.r.t.\ $P$ and $M$ is the martingale part in the $P$-semimartingale decomposition of $X$. 
Since $\nabla_x \varphi$ is uniformly bounded, from Remark {1} in \cite{FadinaNeufeldSchmidt2019},  for small 
enough $0 < u < T - t$, the local martingale part in \eqref{eq:kolmogorov_0} is in fact a true martingale and hence its expectation vanishes. 
Next, we consider the third addend in \eqref{eq:kolmogorov_0}. Then for all $s \in [0,u]$ we have
\begin{align}
\lefteqn{E_P \left[ \int_0^u \nabla_x \varphi_{t+s}(X_{t+s})\beta^P(t+s)ds \right]} \hspace{2cm}\notag\\
&\leq \int_0^u E_P \left[ |\nabla_x \varphi_{t+s}(X_{t+s}) - \nabla_x \varphi_t(x_t)||\beta^P(t+s)| + \nabla_x \varphi_t(x_t)\beta^P(t+s) \right]ds.\notag
\end{align}
By the Lipschitz continuity in \eqref{def:pfadweise_lipschitzstetig} for $\nabla_x \varphi$ and the definition of $\cK$ in \eqref{eq:parametermenge_konstante}, it follows for sufficiently small $u$
\begin{align}
    & \int_0^u E_P \left[ |\nabla_x \varphi_{t+s}(X_{t+s}) - \nabla_x \varphi_t(x_t)||\beta^P(t+s)|\right]ds \notag \\
    & \leq \int_0^u E_P \left[ L \Big(s + \sup_{v \in [0,u]}|X(t+v)-x(t)|\Big) \left(  \mathcal{K} + \mathcal{K}|x(t)| + \mathcal{K} \sup_{v \in [0,u]}|X(t+v) - x(t)|\right)\right]ds\label{eq:LemmaEstimate}  \\
    &\leq C \left( u^3 + u^{5/2} + u^2 + u^{3/2}\right){,} \notag
\end{align}
{where we use Lemma \ref{lemma:supremum_abschÃ¤tzung} in \eqref{eq:LemmaEstimate}.}
Thus, we get
\begin{align} \label{eq:KolmogorovEstimateDx}
{E_P \left[ \int_0^u \nabla_x \varphi_{t+s}(X_{t+s})\beta^P(t+s)ds \right]} &\leq \int_0^u E_P \left[ \nabla_x \varphi_t(x_t)\beta^P(t+s)  \right] ds  \notag \\ & \quad +  C \left( u^3 + u^{5/2} + u^2 + u^{3/2}\right).
\end{align}
By the same {arguments} we get the corresponding estimates for the first and the last addend in \eqref{eq:kolmogorov_0}. Hence, we have
\begin{align*}
    E_P \Big[ \varphi_{t+u} (X_{t+u})&-\varphi_t(X_t) \Big] \leq C \left( u^3 + u^{5/2} + u^2 + u^{3/2}\right) \notag  \\
    & + u \cD_t \varphi_t(x_t) +\int_0^{{u}} E_P \left[  G\left ( x(t+s), \nabla_x \varphi_{t+s}(x_{t+s}), \nabla_x^2 \varphi_{t+s}(x_{t+s})\right)\right]ds,
\end{align*}
where $G$ is defined in \eqref{eq:DefinitionNonlinearity}. The result follows  as in the proof of Theorem {1} of \cite{FadinaNeufeldSchmidt2019}.

\emph{{Weak sense} superviscosity:} consider  $(t,x) \in [0,T) \times C([0,T],\cX)$ and $\varphi \in \Lip_b^{{1,2}}([0,T))$ with $\varphi_t(x_t) = V_t(x_t)$ and  $\varphi_s(y_s) \leq V_s(y_s)$ for all  $(s,y) \in [0,T) \times C([0,T],\cX)$.
Then, for  $0 < u < T-t$,
\begin{align*}
0 = \sup_{P \in \cA(t,x_t,\Theta)} E_P[V_{t+u}(X_{t+u}) - V_t(x_t)] \geq \sup_{P \in \cA(t,x_t,\Theta)}E_P[\varphi_{t+u}(X_{t+u}) - \varphi_t(x_t)].
\end{align*}
Fix $P \in \aset$. As for \eqref{eq:KolmogorovEstimateDx}, we obtain 
\begin{align*}
E_P \left[ \int_0^u \nabla_x \varphi_{t+s}(X_{t+s})\beta^P(t+s)ds \right]
&\geq -C\left(u^3 + u^{5/2} + u^2 + u^{3/2}\right) + \int_0^u E_P\left[\nabla_x \varphi_t(x_t)\beta^P(t+s)\right]ds.
\end{align*}
Proceeding similarly for the other addends, we obtain {with  Lemma \ref{lemma:supremum_abschÃ¤tzung}} 
\begin{align*}
E_P[\varphi_{t+u}(X_{t+u}) - \varphi_t(x_t)]
&\geq -C\left(u^3 + u^{5/2} + u^2 + u^{3/2}\right) + u\cD_t \varphi(x_t)\notag\\
&\quad + \int_0^u E_P\left[\nabla_x \varphi_t(x_t)\beta^P(t+s) + \frac12 \nabla_x^2 \varphi_t(x_t)\alpha(t+s)\right]ds.
\end{align*}
Let $\theta=(b_0,b_1,a_0,a_1, \gamma) \in \Theta$. Then by Proposition \ref{prop:existenz_affiner_prozesse} there exists $P'=P(\theta)\in \aset$ such that 
\begin{align*}
\lefteqn{\int_0^u E_{P'}\left[\nabla_x \varphi_t(x_t)\beta^{P'}(t+s) + \frac12 \nabla_x^2 \varphi_t(x_t)\alpha(t+s)\right]ds} \hspace{2cm}\notag\\
&= \int_0^u E_{P'}\bigg[\nabla_x \varphi_t(x_t)\left(b_0(t+s,X(t+s))+b_1(t+s,X(t+s))X(t+s)\right)\notag\\
&\quad + \frac12 \nabla_x^2 \varphi_t(x_t)\left(a_0(t+s,X(t+s))+a_1(t+s,X(t+s))X(t+s)^+\right)^{2\gamma(t+s,X(t+s))}\bigg]ds\notag\\
&= \int_0^u E_{P'}\left[G_{\theta}\left(X(t+s),\nabla_x \varphi_t(x_t),\nabla_x^2 \varphi_t(x_t)\right)\right]ds.
\end{align*}
Here, we set
\begin{align*}
    G_{\theta}(y,p, q)&:=b(t,y)p+  \frac12  a(t,y) q,
\end{align*}
where the functions $a,b$ are as in \eqref{eq:SetNonEmpty1}, \eqref{eq:SetNonEmpty2}.
Hence, by a further application of {Lemma \ref{lemma:supremum_abschÃ¤tzung}},
\begin{align*}
\lefteqn{\int_0^u E_{P'}\left[G_{\theta}\left(X(t+s),\nabla_x \varphi_t(x_t),\nabla_x^2 \varphi_t(x_t)\right)\right]ds} \hspace{2cm} \notag\\
&\geq u\,G_{\theta}\left(x(t),\nabla_x \varphi_t(x_t),\nabla_x^2 \varphi_t(x_t)\right) - C \int_0^u E_{P'}\left[ \sup_{v \in [0,u]}|X(t+v)-x(t)| \right]ds\notag\\
&\geq u\,G_{\theta}\left(x(t),\nabla_x \varphi_t(x_t),\nabla_x^2 \varphi_t(x_t)\right) - C\left(u^2 + u^{3/2}\right).
\end{align*}
Summarizing, we established
\begin{align*}
&E_{P'}[\varphi_{t+u}(X_{t+u}) - \varphi_t(x_t)] \\
& \quad \geq -C\left(u^3 + u^{5/2} + u^2 + u^{3/2}\right) + u\cD_t \varphi(x_t) + uG_{\theta}\left(x(t),\nabla_x \varphi_t(x_t),\nabla_x^2 \varphi_t(x_t)\right).
\end{align*}
Dividing by $-u$ and letting $u \to 0$ it follows that 
\begin{align*}
-\cD_t \varphi(x_t) - G_{\theta}\left(x(t),\nabla_x \varphi_t(x_t),\nabla_x^2 \varphi_t(x_t)\right) &\geq 0.
\end{align*}
By taking the supremum over all $\theta \in \Theta$ it follows that $V$ is a viscosity solution. 
\end{proof}

{In the following we study the case {in which} the value function has a {more regular} solution. This question was already initiated in {Theorem 5 in} \cite{FadinaNeufeldSchmidt2019}, where it was shown that on a non-negative state space the worst-case with monotone value function turns out to be an affine process itself. 
Intuitively, in this case the supremum of Equation \eqref{eq:DefinitionNonlinearity} is always the upper bound and hence the problem becomes linear. We derive the associated consequences for the value function in the following result

Since in this case we will be able to exploit the underlying Markov structure, we introduce the family 
$$\cA(x): =\cA(0,x,\Theta), $$ 
for $x \in C(\lbrace{0\rbrace}, \cX)= \cX$. 
}

\begin{proposition}
    \label{korollar:kolmogorov_folgenrung2}
Consider the case  $\cX = \R_{>0}$ and $\ubar \gamma=\obar \gamma=\frac{1}{2}$ in Equation \eqref{eq:parametermenge}, i.e.\ $\ubar{a}^0 = \obar{a}^0 = 0,\ \ubar{b}^0 \geq \frac{\obar{a}^1}{2} > 0$. Assume that $\Phi$ is an increasing, convex function  which has at most polynomial growth. Then, the value function
\begin{align}\label{eq:DefineValueFunctionCorollary}
 {V(t,x):=\sup_{{P \in \cA(x)}} E_P[\Phi(X(T-t))], \quad (t,x) \in [0,T] \times \mathbb{R}_{>0}}
\end{align}
is a classical solution to the linear PDE
\begin{align}
    \label{eq:kolmogorv_special_case}
    \begin{cases}\partial_t V(t,x) + \big(\obar{b}^0 + \obar{b}^1x \big) \partial_x V(t,x) + \frac12 \obar{a}^1x\, \partial_{xx} V(t,x) = 0,& (t, x) \in [0, T) \times \R_{> 0},\\
    V (T,x) = \Phi (x),& x \in \R_{> 0}.
    \end{cases}
    \end{align}
Moreover, it is the unique solution among all solutions of polynomial growth.
    Furthermore, for every  ${x(0)} \in \R_{> 0}$, the  SDE 
    \begin{align}\label{eq:SDEFeller}
    dX(t) &= \big(\obar{b}^0 + \obar{b}^1X(t)\big)dt + \sqrt{\obar{a}^1\,X(t)}dW(t), \quad X (0) ={x(0)}, 
    \end{align}
    admits a {unique strong solution} with 
    \begin{align} \label{eq:ExpectationWorstCase}
    V(t,{x(0)}) = {E_{P}[\Phi(X(T - t))\, | \, X(0)=x(0)]}, \quad (t, x) \in [0, T] \times {\R_{>0}}.
    \end{align}
\end{proposition}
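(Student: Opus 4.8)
The plan is to identify the worst-case model explicitly and then verify its optimality. Since $\Phi$ is increasing and convex, I expect the value function to inherit these properties, so that $\partial_x V\ge 0$ and $\partial_{xx}V\ge 0$. Inspecting the nonlinearity $G$ in \eqref{eq:DefinitionNonlinearity} in the present regime ($\ubar a^0=\obar a^0=0$, $\ubar\gamma=\obar\gamma=1/2$) shows that, for $y>0$, $p\ge 0$ and $q\ge 0$, the supremum over $\Theta$ is attained at the upper endpoints $b_0=\obar b^0$, $b_1=\obar b^1$, $a_1=\obar a^1$, so that $G(y,p,q)=(\obar b^0+\obar b^1 y)p+\frac12\obar a^1 y\,q$ and the fully nonlinear equation collapses to the linear operator $\mathcal{L}^{\max}=(\obar b^0+\obar b^1 x)\partial_x+\frac12\obar a^1 x\,\partial_{xx}$ of \eqref{eq:kolmogorv_special_case}. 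This singles out the Feller diffusion \eqref{eq:SDEFeller} as the candidate worst-case dynamics. First I would check that \eqref{eq:SDEFeller} is well posed: its coefficients are continuous of at most linear growth and the diffusion coefficient $\sqrt{\obar a^1 x}$ is $1/2$-H\"older, so a weak solution exists (as in Proposition \ref{prop:existenz_affiner_prozesse}) and pathwise uniqueness holds by the Yamada--Watanabe criterion, giving a unique strong solution; the standing assumption $\ubar b^0\ge \obar a^1/2$ yields the Feller condition $2\obar b^0\ge \obar a^1$, which keeps the solution strictly positive. Consequently its law $P$ has characteristics $\obar b^0+\obar b^1 X\in b^*(X)$ and $\obar a^1 X\in a^*(X)$ and lives on $C([0,T],\R_{>0})$, hence $P\in\cA(x)$.

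Next I would set $\widehat V(t,x):=E_{P}[\Phi(X(T-t))\mid X(0)=x]$ and show it is a classical $C^{1,2}$ solution of \eqref{eq:kolmogorv_special_case} of polynomial growth. Well-definedness and the growth bound follow because all polynomial moments of the Feller diffusion are finite and grow polynomially in the starting value, while smoothness of the Feller transition semigroup delivers $\widehat V\in C^{1,2}$ together with the Feynman--Kac identity $\partial_t\widehat V+\mathcal{L}^{\max}\widehat V=0$, $\widehat V(T,\cdot)=\Phi$. I would then prove that $\widehat V(t,\cdot)$ is increasing and convex. The cleanest route is propagation of monotonicity and convexity: differentiating the PDE once and twice in $x$ shows that $\partial_x\widehat V$ and $\partial_{xx}\widehat V$ solve linear parabolic equations of the same degenerate form with only an added zeroth-order term, so the maximum principle (within the polynomial-growth class) carries the signs $\Phi'\ge 0$ and $\Phi''\ge 0$ forward in time. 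Alternatively one may argue probabilistically via monotonicity of the flow $x\mapsto X^x$ and convexity preservation for diffusions with affine drift and affine squared volatility.

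With $\widehat V$ in hand, the identification $\widehat V=V$ is a verification argument. The inequality $\widehat V\le V$ is immediate because $P\in\cA(x)$. For the reverse, fix any $P'\in\cA(x)$ and apply It\^o's formula to $s\mapsto \widehat V(t+s,X(s))$ on $[0,T-t]$; using $\partial_t\widehat V=-\mathcal{L}^{\max}\widehat V$ and $\widehat V(T,\cdot)=\Phi$, the drift integrand equals $\big(\beta^{P'}-(\obar b^0+\obar b^1 X)\big)\partial_x\widehat V+\frac12\big(\alpha-\obar a^1 X\big)\partial_{xx}\widehat V$, which is nonpositive since $\beta^{P'}\in b^*(X)$ and $\alpha\in a^*(X)$ lie below the upper endpoints while $\partial_x\widehat V\ge 0$ and $\partial_{xx}\widehat V\ge 0$. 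Localizing and passing to the limit with the uniform moment bound of Lemma \ref{lemma:supremum_abschÃ¤tzung} makes the stochastic integral a true martingale, so taking expectations gives $E_{P'}[\Phi(X(T-t))]\le \widehat V(t,x)$; the supremum over $P'$ yields $V\le\widehat V$. Thus $V=\widehat V$ is the claimed classical solution and \eqref{eq:ExpectationWorstCase} holds with this $P$. Uniqueness in the polynomial-growth class follows from the same It\^o computation under $P$: for any competing $C^{1,2}$ solution $\widetilde V$ of polynomial growth one obtains $\widetilde V(t,x)=E_{P}[\widetilde V(T,X(T-t))]=E_{P}[\Phi(X(T-t))]=\widehat V(t,x)$.

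The hard part will be the regularity step. The diffusion coefficient $\sqrt{\obar a^1 x}$ degenerates at the boundary and is not Lipschitz, so the textbook Feynman--Kac and interior parabolic-regularity theorems do not apply directly; I would rely on the Feller condition (which keeps the process off $0$) and on known smoothness results for affine/CIR semigroups to secure $\widehat V\in C^{1,2}$ and to legitimize differentiating the PDE in the convexity argument. A secondary technical point is the use of the maximum principle and of the martingale property on the unbounded domain $\R_{>0}$, where polynomial growth of $\Phi$ and $\widehat V$ combined with the moment estimates of Lemma \ref{lemma:supremum_abschÃ¤tzung} provide the required uniform integrability.
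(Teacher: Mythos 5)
Your proposal reaches the same conclusion and identifies the same worst-case Feller/CIR dynamics, but the key comparison step is done by a genuinely different route. You run a PDE verification argument: establish that $\widehat V(t,x)=E_{P}[\Phi(X(T-t))\mid X(0)=x]$ is a $C^{1,2}$ solution with $\partial_x\widehat V\ge 0$ and $\partial_{xx}\widehat V\ge 0$, then apply It\^o's formula under an arbitrary $P'\in\cA(x)$ and use the sign of the drift integrand to conclude $E_{P'}[\Phi(X(T-t))]\le\widehat V(t,x)$. The paper instead decouples the two halves of the claim: the inequality $E_{P}[\Phi(X(T-t))]\ge E_{P'}[\Phi(X(T-t))]$ is obtained purely probabilistically from the semimartingale comparison theorem of \cite{Bergenthum_Rueschendorf_2007}, whose hypotheses are (a) the propagation-of-order property for increasing convex functions, verified for the affine worst-case process via the first-moment formula $E[X_t\mid X_0=x]=\pi_0(t)+x\pi_1(t)$ (Proposition \ref{prop:PO}), and (b) the pointwise ordering of the differential characteristics; only afterwards is the PDE representation obtained from \cite{Janson_Tysk_2006}, which requires merely \emph{continuity} of $(t,x)\mapsto E_{P_x}[\Phi(X(T-t))]$, not a priori $C^{1,2}$ regularity. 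What the paper's route buys is exactly the avoidance of the issue you flag as ``the hard part'': it never needs classical derivatives of $\widehat V$, their signs, or growth bounds on them to identify the worst case. What your route buys is self-containedness (no external comparison theorem) and a uniqueness statement that falls out of the same It\^o computation. The one place where your argument is genuinely fragile rather than merely laborious is the propagation-of-convexity step via differentiating the degenerate equation $\partial_t+\big(\obar b^0+\obar b^1x\big)\partial_x+\tfrac12\obar a^1x\,\partial_{xx}$ twice in $x$ and invoking a maximum principle on the unbounded degenerate domain $\R_{>0}$; this is not a textbook application and would need care near $x=0$. Your probabilistic fallback (monotonicity of the flow plus convexity preservation for one-dimensional diffusions) is the safer path and is, in substance, what the paper's propagation-of-order lemma encodes; if you pursue your version, you should lean on that and on affine/CIR semigroup smoothness for the $C^{1,2}$ and derivative-growth claims, and note that the moment bounds needed to de-localize the stochastic integral follow from a Gronwall argument for semimartingales with affinely bounded characteristics rather than from the small-time Lemma \ref{lemma:supremum_abschÃ¤tzung}.
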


\begin{proof}{
If the state space $\cX=\R_{>0}$ consists only of positive numbers, monotonicity can be applied to simplify the approach. Indeed, recall equation \eqref{eq:DefinitionNonlinearity} and observe that under our assumptions it holds  for $p,q>0$ and $x>0$ that
\begin{align} G(x,p,q) = \sup_{(b_0,b_1,a_1, \gamma) \in \Theta} \Big((b_0 + b_1x)p + \frac12  (a_1x^+)q \Big) 
= (\bar b_0 + \bar b_1x)p + \frac12  \bar a_1x q. 
\end{align}
This is the generator of the Feller process or Cox-Ingersoll-Ross model in the SDE \eqref{eq:SDEFeller} - the worst case is reached by a classical affine model. 

To proceed further, we note that the SDE \eqref{eq:SDEFeller} has a unique strong solution with state space \(\mathbb{R}_{> 0}\) under the condition $ \ubar{b}^0 \geq \frac{\obar{a}^1}{2}>0$, see \cite[Section 6.3.1]{JeanblancChesneyYor2009}.
We denote its law by $P_{x(0)}$. This corresponds to an affine process, and hence is also a non-linear affine process such that  $P_{x(0)} \in \cA(x(0))$. Therefore, 
\begin{align}\label{temp:24}
\sup_{P \in \cA({x(0)})} E_P[\Phi(X(T-t))] \geq E_{P_{x(0)}}[\Phi(X(T-t))].
\end{align}
We also note that, similarly to \eqref{eq:moments}, all moments of the CIR process exists. Hence, 
$E_{P_{x(0)}}[\Phi(X(T-t))]>-\infty$ and all appearing expectations in \eqref{temp:24} are well-defined.

To achieve equality, we prove the other direction of this inequality relying on comparison results for general semimartingales. The details are shortly presented in the appendix.
We will apply Theorem 2.2 in \cite{Bergenthum_Rueschendorf_2007} to  $P \in\cA(x(0))$ and need to check the prerequisits: Note that by \ref{prop:PO} the propagation of order property for the increasing and convex function $\Phi$ follows. 
Moreover, for any semimartingale law $P \in \cA(x)$,  comparison of the differential characteristics holds in the following sense: if $(\beta,\alpha)$ denote the differential characteristics under $P$, then, again using $x \ge 0$,
$$ \beta(\omega,t) \le \obar{b}_0 + \obar{b}_1X(\omega,t), \qquad \text{and} \qquad  \alpha(\omega,t) \le  \obar{a}_1 X(\omega,t) $$
Since $\Phi$ is increasing and convex, Theorem 2.2 in \cite{Bergenthum_Rueschendorf_2007} yields that
$$
E_P[\Phi(X(T-t))] \leq E_{P_{x(0)}}[\Phi(X(T-t))], 
$$
and  equality 
\eqref{eq:ExpectationWorstCase} follows. }

It remains to show that for any $x:=x(0) \in \mathbb{R}_{>0}$ the value function $V$ in \eqref{eq:DefineValueFunctionCorollary} solves \eqref{eq:kolmogorv_special_case}. To do so we apply Theorem 6.1 in \cite{Janson_Tysk_2006} which allows to conclude that the stochastic representation $E_{P_{x}}[\Phi(X(T-t))]$ is a strong solution of the PDE in \eqref{eq:kolmogorv_special_case}, if $ (t, x) \mapsto E_{P_{x}}[\Phi(X(T-t))]$ is continuous. This continuity follows as in Theorem 2.15 in \cite{criens2022MarkovSelection}.
\end{proof}

{While in some special cases the value function is a solution of a linear PDE as in Proposition \ref{korollar:kolmogorov_folgenrung2}, we generally have to solve a non-linear functional Kolmogorov equation in order to valuate path-dependend options in a robust way, as Theorem \ref{thm:kolmogorov} shows. 
We already explained in the introduction that this leads to non-trivial difficulties, such that in general it is not possible to come up with a convincing numerical approach for this kind of highly complicated equations.}

In the following, we outline conditions under which a tractable valuation is feasible. To do so, we first focus on Asian options by making use of the smoothing effect of the average value such options depend on. Second, we consider more general payoffs and rely on machine learning methods for the evaluation. {These two methodologies represent some first results which allow to use the class of non-linear affine processes with path-dependence in practical applications. However, for further research it is necessary to study the regularity conditions of the associated value functions which are required for these results, see also Remark \ref{rem:Regularity1} and \ref{rem:Regularity2}.}

\section{Asian options}  
\label{subsec:beispiele}

For options which are not path-dependent, the payoff is given by $ \Phi(x_T) = h(x(T))$ {for a} bounded {and} measurable function $h$. 
Then $V_t(x_t)=v(t,x(t))$ for some function $v:[0,T] \times \cX \to \R$ since the distribution of $X(T)$ only depends on $x_t$ through the value $x(t)$. 
If $v$ is continuous (for example if $h$ is Lipschitz continuous as in Lemma 3.6 in \cite{FadinaNeufeldSchmidt2019}), then $v$ is a viscosity solution of the path-independent Kolmogorov align and we obtain Theorem 4.1 in \cite{FadinaNeufeldSchmidt2019} as a special case. 
More interesting in the context we consider here are path-dependent options.

\subsection{Asian options}
\label{subsubsec:asiatische_optionen}

Asian options are European options on the average of a stock price over a certain time interval. Due to the smoothing effect of 
the average, Asian options are  attractive for investors, since the payoff is more difficult to manipulate, and they are easier to hedge and
typically less risky. The history of papers on Asian options is long and we refer for example to \cite{bayraktar2011pricing,kirkby2020efficient} for further references.

{In the following, an} Asian option is considered as a typical path-dependent payoff which illustrates our methodology. To this end, assume that the payoff $\Phi$ is given by 
\begin{align*}
    \Phi(x_T) = h\left(\int_0^T x(s)ds, x(T)\right), \quad x \in C([0,T],\cX)
\end{align*}
for a bounded and measurable function $h$. As above, the value function depends on $x_t$ only through $x(t)$ such that
\begin{align*}
V_t(x_t) = v\left(t,\int_0^t x(s)ds, x(t)\right), \quad (t,x) \in [0,T] \times C([0,T],\cX)
\end{align*}
with some appropriate function $v$.  To compute $v$, we assume that $h$ is sufficiently regular to {guarantee} that $v$ is sufficiently differentiable and consider 
\begin{align}
\label{eq:asiatische_option_pde}
\begin{split}
-\partial_t v(t,y,z) - z\partial_y v(t,y,z) - G(z, \partial_z v(t,y,z), \partial_{zz} v(t,y,z)) &= 0 \quad \quad \quad \text{on } [0,T) \times (\cX \cup \{0\}) \times \cX,\\
v(T,y,z) &= h(y,z) \quad \forall (y,z) \in \cX \times \cX,
\end{split}
\end{align}
where $G$ is given in \eqref{eq:DefinitionNonlinearity}.

\begin{proposition}
\label{lemma:asiatische_option}
Assume $\Theta$ is proper and 
$v \in C^{1,1,2}([0,T] \times (\cX \cup \{0\}) \times \cX)$
with Lipschitz continuous and uniformly bounded derivatives. Then $v$ is a solution of the non-linear PDE in Equation \eqref{eq:asiatische_option_pde}.
\end{proposition}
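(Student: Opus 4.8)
The plan is to invoke the characterization of the value function as a weak sense viscosity solution of the functional Kolmogorov equation \eqref{eq:kolmogorov} provided by Theorem \ref{thm:kolmogorov}, and then to translate the functional derivatives appearing there into the ordinary partial derivatives of $v$. Concretely, I would set $F_t(x_t) := v(t, \int_0^t x(s)\,ds, x(t))$, note that by the very definition of $v$ one has $F = V$, and then use the assumed smoothness of $v$ to identify $F$ as a \emph{classical} solution of the functional PDE.

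First I would check that $F \in \Lip_b^{1,2}([0,T))$. Here the hypothesis $v \in C^{1,1,2}$ with Lipschitz continuous and uniformly bounded derivatives does the work: the map $x_t \mapsto (t, \int_0^t x(s)\,ds, x(t))$ is continuous for the distance $d$ in \eqref{eq:PseudoNorm} and sends paths bounded by $R$ to arguments bounded by $(TR, R)$, so the continuity, boundedness-preserving, left-continuity and Lipschitz properties \eqref{def:pfadweise_lipschitzstetig} required for $F$, $\cD F$, $\nabla F$ and $\nabla^2 F$ are all inherited from the corresponding properties of $v$ and its partials. This step is essentially bookkeeping but is needed to use $F$ as an admissible test function.

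Next I would compute the three functional derivatives. For the horizontal derivative, the horizontal extension $x_{t,h}$ is constant equal to $x(t)$ on $(t,t+h]$, so $\int_0^{t+h} x_{t,h}(s)\,ds = \int_0^t x(s)\,ds + h\,x(t)$; writing $y = \int_0^t x(s)\,ds$ and $z = x(t)$ and applying the chain rule gives
\begin{align*}
\cD_t F(x_t) = \partial_t v(t,y,z) + z\,\partial_y v(t,y,z).
\end{align*}
For the vertical derivatives, the vertical perturbation $x_t^h$ changes the path only at the single point $t$, which leaves $\int_0^t x(s)\,ds$ unchanged and shifts the endpoint to $x(t)+h$, whence $\nabla_x F_t(x_t) = \partial_z v(t,y,z)$ and, upon iterating, $\nabla_x^2 F_t(x_t) = \partial_{zz} v(t,y,z)$. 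The only mildly delicate point is the extra drift term $z\,\partial_y v$ produced by the running integral in the horizontal direction; the vertical directions are immediate since a single-point modification does not affect the integral.

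Finally, since $F = V$ is a weak sense viscosity solution and, by the first step, itself lies in $\Lip_b^{1,2}([0,T))$, I would take $\varphi = V$ as the test function in both the sub- and the supersolution inequalities: it touches $V$ from above and from below trivially, so both defining inequalities hold at every $(t,x)$ and combine to the pointwise equality $-\cD_t F(x_t) - G(x(t), \nabla_x F_t(x_t), \nabla_x^2 F_t(x_t)) = 0$. Substituting the derivative identities above and writing $z = x(t)$, $y = \int_0^t x(s)\,ds$ yields precisely the PDE \eqref{eq:asiatische_option_pde}, the terminal condition $v(T,y,z) = h(y,z)$ following from $F_T = \Phi$. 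I expect the verification that $F \in \Lip_b^{1,2}$ and the correct handling of the running-integral variable in the horizontal derivative to be the only genuine obstacles; the remainder is a direct substitution.
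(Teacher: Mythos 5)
Your proposal is correct and follows essentially the same route as the paper: extend $V$ to the non-anticipative functional $F_t(x_t)=v(t,\int_0^t x(s)\,ds,x(t))$, compute $\cD_tF=\partial_t v+x(t)\partial_y v$, $\nabla_xF=\partial_z v$, $\nabla_x^2F=\partial_{zz}v$, verify $F\in\Lip_b^{1,2}([0,T))$ from the Lipschitz and boundedness hypotheses on the derivatives of $v$, and conclude via Theorem \ref{thm:kolmogorov} (the paper is slightly terser on the last step, but your use of $V$ as its own test function is exactly the implicit argument, and the paper adds only the small final remark that one must choose $x$ ranging over enough paths to realize every $(t,y,z)$).
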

\begin{proof}
We can extend $V$ to a non-anticipative functional on the space of c\`adl\`ag paths by setting
$V_t(x_t) := v(t,\int_0^t x(s)ds,x(t))$ for $(t,x) \in [0,T] \times D([0,T], \cX).$
Then, by using the notation and definitions introduced in Subsection \ref{subsection:FunctionalDerivatives} we get
\begin{align}
\label{eq:asiatische_option_ableitungenI}
\cD_t V(x_t) &= \lim_{h \searrow 0} \frac1h (V_{t+h}(x_{t,h}) - V_t(x_t))\notag\\
&= \lim_{h \searrow 0} \frac1h \left(v\left(t+h,\int_0^t x(s)ds + x(t)h,x(t)\right) - v\left(t,\int_0^t x(s)ds,x(t)\right)\right)\notag\\
&= \partial_t v\left(t,\int_0^t x(s)ds,x(t)\right) + x(t) \partial_y v\left(t,\int_0^t x(s)ds,x(t)\right).
\end{align}
Moreover, 
\begin{align}
\nabla_x V_t(x_t) 
&=\partial_z v\left(t,\int_0^t x(s)ds,x(t)\right)\quad \text{ and } \quad 
\nabla_x^2 V_t(x_t) 
= \partial_{zz} v\left(t,\int_0^t x(s)ds,x(t)\right). \label{eq:asiatische_option_ableitungenII}
\end{align}

Now we show $V \in \text{Lip}_b^{1,2}([0,T))$. To this end, note that $V \in \C_b^{{1,2}}([0,T))$ by the above computations and the assumptions on $v$.

Thus, we concentrate on the pathwise Lipschitz continuity in \eqref{def:pfadweise_lipschitzstetig} of the derivatives of $V$. Let $x, \tilde{x} \in D([0,T],\cX)$, $t \in [0,T]$ and $h \in [0,T-t)$. We have 
\begin{align*}
| \nabla_x V_{t+h}(x_{t+h}) - \nabla_x V_t(\tilde{x}_t) |
&= \bigg| \partial_z v\Big( t+h,\int_0^{t+h} x(s)ds,x(t+h)\Big) - \partial_z v\Big( t,\int_0^t \tilde{x}(s)ds,\tilde{x}(t)\Big) \bigg|\\
&\leq C \Big( h + \int_0^{t}|x(s)- \tilde{x}(s)|ds + \int_t^{t+h}|x(s)|ds + |x(t+h) - \tilde{x}(t)| \Big)\\
&\leq C \Big( h + h\sup_{s \in [0,h]}|{x}(t+s)| + \sup_{s \in [0,h]}|x(t+s) - \tilde{x}(t)| \Big)\\
&\leq C \Big(  (1+|\tilde{x}(t)|)h + (1+T)\sup_{s \in [0,h]}|x(t+s) - \tilde{x}(t)| \Big)\\
&\leq C \Big( h + \sup_{s \in [0,h]}|x(t+s) - \tilde{x}(t)| \Big),
\end{align*}
with a constant $C$ depending on $x_t,\tilde{x}_t$; again we change the constant $C$ from row to row. The pathwise Lipschitz continuity of $\nabla_x^2 V$ follows in a similar way. Next, we prove that also $\cD_t V$ is pathwise Lipschitz continuous. This follows for the first addend in \eqref{eq:asiatische_option_ableitungenI} by similar arguments as for $\nabla_x V, \nabla_x^2 V$.  For a short notation we set 
\begin{align*}
\Xi_t^{\tilde{x}} &:= \left(t,\int_0^t \tilde{x}(s)ds,\tilde{x}(t)\right) \quad \text{and } 
\quad \Xi_{t+h}^x:= \left(t+h,\int_0^{t+h} x(s)ds,x(t+h)\right).
\end{align*}
For the second addend in \eqref{eq:asiatische_option_ableitungenI} we obtain 
\begin{align*}
\left| x(t+h) \partial_y v(\Xi_{t+h}^x) - \tilde{x}(t)\partial_y v(\Xi_t^{\tilde{x}}) \right| 
&\leq |\partial_y v(\Xi_{t+h}^x)||x(t+h)-\tilde{x}(t)| + |\tilde{x}(t)||\partial_y v(\Xi_{t+h}^x)-\partial_y v(\Xi^{\tilde{x}}_t)|\\
&\leq C \sup_{s \in [0,h]}|x(t+s)-\tilde{x}(t)| + C \, |\tilde{x}(t)| \Big(h + \sup_{s \in [0,h]}|x(t+s) - \tilde{x}(t)| \Big)\\
&\leq C\Big(h + \sup_{s \in [0,h]}|x(t+s) - \tilde{x}(t)| \Big),
\end{align*}
where we use that $\partial_y v$ is bounded and with a changing constant $C$ depending on $x_t,\tilde{x}_t$. Thus, the Lipschitz continuity of $\cD_t V$ follows. By \eqref{eq:asiatische_option_ableitungenII} and the assumption that the derivatives of $v$ are uniformly bounded, it follows that also $\nabla_x V$ is uniformly bounded. Therefore, we conclude that $V \in \text{Lip}_b^{{{1,2}}}([0,T)).$ Hence, $V$ is a  solution of \eqref{eq:kolmogorov}, i.e. 
\begin{align*}
-\partial_t v(\Xi_t^x) - x(t)\partial_y v(\Xi_t^x) - G\Big(x(t),\partial_z v(\Xi_t^x),\partial_{zz}v(\Xi_t^x)\Big) = 0.
\end{align*}
By choosing $x\in C([0,T],\cX)$ appropriately we obtain for all  $(t,y,z) \in (0,T) \times \cX \times \cX$ that 
\begin{align*}
-\partial_t v(t,y,z) - x(t)\partial_y v(t,y,z) - G\Big(x(t),\partial_z v(t,y,z),\partial_{zz}v(t,y,z)\Big) = 0
\end{align*}
which can be extended by continuity to $[0,T)$. It is easy to check the boundary condition and we obtain that $v$ solves 
\eqref{eq:asiatische_option_pde}.
\end{proof}
{
\begin{remark}[Regularity of the value function] \label{rem:Regularity1}
To verify examples under which the assumption regarding the regularity of the value function in Proposition \ref{lemma:asiatische_option} is satisfied, we refer to the enormous literature of Hamilton-Jacobi-Bellman equations. The PDE in Equation \eqref{eq:asiatische_option_pde} with the non-linear function $G$ in \eqref{eq:DefinitionNonlinearity} belongs to this class of equations. Furthermore, we refer to Appendix C in \cite{peng2019nonlinear} which also discusses conditions under which a viscosity solution of a non-linear PDE of this type is a strong solution.
\end{remark}}
The result of Proposition \ref{lemma:asiatische_option} opens the door to the fast numerical evaluation of Asian options under parameter uncertainty. Consider the (bounded) put option given by 
\begin{align*}
\Phi(x_T) = \left( K_1 - \frac1T \int_0^T x(s)ds \right)^+ \wedge K_2, \quad x \in C([0,T]).
\end{align*}
with $K_1, K_2 \in \R$. 
Since $K_2$ can be chosen arbitrarily large it does not play a role in the numerical computation and can be neglected. 

Following the non-linear Vasi\v cek model proposed in \cite[Section 4.2]{FadinaNeufeldSchmidt2019} we choose the model parameters
\begin{align}
\label{eq:asiatische_option_nichtlineares_vasicek_modell_parameter}
\begin{split}
\ubar{\gamma}= \obar{\gamma}=\frac{1}{2},\ \ubar{b}^0 = 0.05,\ \obar{b}^0 = 0.15,\ \ubar{b}^1 &= -3,\ \obar{b}^1 = -0.5,\ \ubar{a}^0 = 0.01,\ \obar{a}^0 = 0.02,\ \ubar{a}^1 = \obar{a}^1 = 0,\\
&T = 1,\ K_1 = 0.2,\ K_2 = 10^6.
\end{split}
\end{align}
As comparison we use the following classical Vasi\v cek models: for model 1 we consider the lower bound of the mean reversion level and for model 2 the upper bound, respectively. Moreover, in both models the volatility equals the upper bound of the non-linear  Vasi\v cek model. This results in the the following model parameters:
\begin{align}
\label{eq:asiatische_option_vasicek_modell_parameter}
\begin{split}
\ubar \gamma=\obar \gamma= \frac{1}{2}, \ \ubar{b}^0 = \obar{b}^0 = 0.15,\ \ubar{b}^1 = \obar{b}^1 = -3,\ \ubar{a}^0 = \obar{a}^0 = 0.02,\ \ubar{a}^1 = \obar{a}^1 = 0 \quad &\text{(Vasi\v cek 1)},\\
\ubar \gamma=\obar \gamma= \frac{1}{2}, \ \ubar{b}^0 = \obar{b}^0 = 0.15,\ \ubar{b}^1 = \obar{b}^1 = -0.5,\ \ubar{a}^0 = \obar{a}^0 = 0.02,\ \ubar{a}^1 = \obar{a}^1 = 0 \quad &\text{(Vasi\v cek 2)}.
\end{split}
\end{align}
The fully non-linear Kolmogorov PDE can be solved with a finite-difference method in Matlab, the outcome is shown in Figure \ref{fig:asiatische_put_option_0_T1} for $x_0 \in [-0.5,0.5]$.
\begin{figure} 
\includegraphics[width=11cm,keepaspectratio]{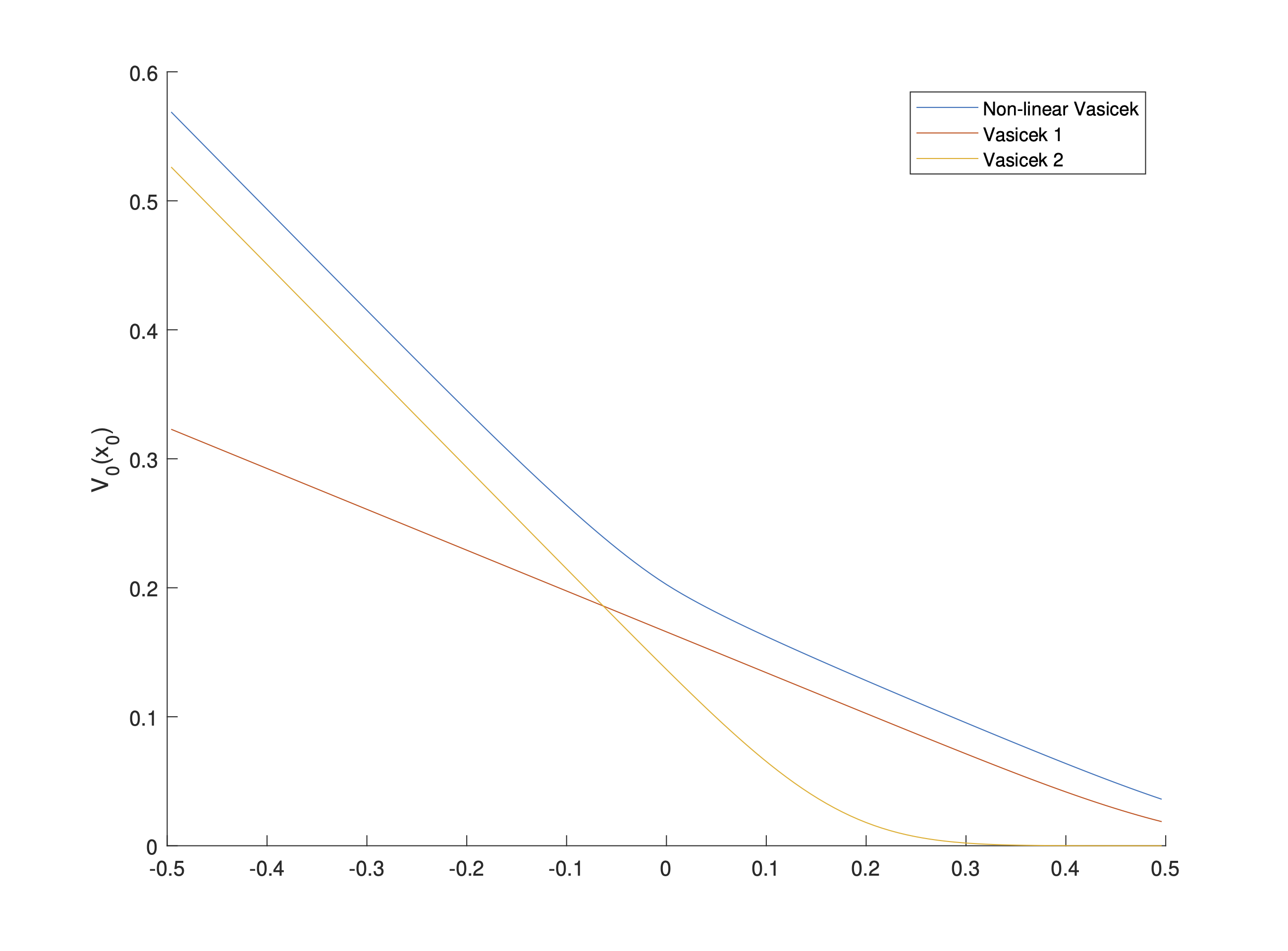}
\caption{Numerically obtained prices for \textbf{Asian put options}. Blue is the non-linear Vasi\v cek model (the upper line) and red and yellow are prices for Vasi\v cek models without uncertainty, however parameters chosen in such a way that the mean reversion speed is either maximal or minimal, compare Equations \eqref{eq:asiatische_option_nichtlineares_vasicek_modell_parameter} and \eqref{eq:asiatische_option_vasicek_modell_parameter}, respectively. On the $x$ axis we plot the initial value $x_0$.}
\label{fig:asiatische_put_option_0_T1} 
\end{figure}  

{Note that} the initial value $x_0$ influences the value of the Asian put in such a way that a higher initial value implies on average a higher value of the integral $\int_0^T X(s)ds$, leading to lower prices for the put. 

Let us first compare the two classical Vasi\v cek models. 
The only difference is in the parameter $b_1$. However, for an intuition it is better to consider the mean reversion speed $\kappa=-b_{1}$ and the mean reversion level $\theta=\nicefrac{b_0}{-b_1}$. Note that by changing $b_1$ both models have different mean-reversion speed \emph{and} mean-reversion level. In the model \textbf{Vasi\v cek 1}, $\kappa  = 3$ and $\theta = 0.05$, while in the model \textbf{Vasi\v cek 2}, $\kappa = 0.5$ and $\theta = 0.3$. In the latter model, the mean-reversion level is at the upper boundary of the interval for $x_0$ (0.5) such that the process has the tendency to increase and the option shows a strong dependence from $x_0$ a little further away from $0.5$ (approximately $0.2$ in our case). In Vasi\v cek 1, the model mean-reverts strongly to $0.05$ and the dependence on the initial value is therefore less pronounced.   

The price under model under uncertainty (the \textbf{non-linear Vasi\v cek model}) clearly dominates the prices without parameter uncertainty. Moreover, the value function shows a smoothed kink around $0$ which stems from the switching behaviour of the supremum in the non-linear Kolmogorov equation at $0$.

\section{The numerical solution of functional PDEs} \label{sec:BarierOptions}

A full numerical study of the non-linear pricing of path-dependent options is far beyond the scope of this article. Our intention is to present a small numerical example which on the one side highlights the feasibility of the chosen numerical approach via deep neural networks and on the other side also shows the challenges of this approach. 

While in the case of Asian options finite difference methods could be applied since the average has a smoothing effect, this will no longer be possible in other cases, for example when Barrier options are considered.  For such options we propose to solve the equation numerically relying on machine learning methods. 
The idea is to write the problem as a forward-backward SDE which we then discretize in time. The resulting functional derivatives are approximated with neural networks.

The proposed method is based on the recent work of \cite{EHanJentzen2017,BeckEJentzen2019}. 

\subsection{Barrier options}
\label{subsubsec:barrier_optionen}
Barrier options are inherently path-dependent. We consider for simplicity digital barrier options, and choose the case of an up-and-in option. These products offer the payoff $1$ if the barrier $B {\geq 0}$ is reached in the interval of consideration, $[0,T]$, and $0$ otherwise. In practice, such products are highly attractive because they often allow for cheaper investments (an up-and-in Call might be significantly cheaper in comparison to a standard Call). More precisely, the payoff is given by 
\begin{align} \label{eq:barrier}
\Phi(x_T) = \Ind_{\{\sup_{s \in [0,T]}x(s) \geq B\}}, \quad x \in C([0,T],\cX){.}
\end{align}
In this case we are not able to reduce the value function to a classical PDE for $v$ since the non-anticipative functional 
\begin{align*}
    F_t(x_t) = \sup_{s \in [0,t]}x(s)
\end{align*}
is not vertically differentiable (consider $t > 0$ and $x$ with  ${\sup_{s \in [0,t]}x(s) = x(0) = x(t)}$).

We therefore propose an algorithm how to solve the (non-linear) functional Kolmogorov PDE in an approximative way by relying on machine learning techniques. Note that a Monte-Carlo simulation in this context is highly intractable because one has to simulate under all measures from $\aset$.

Moreover, solving the Kolmogorov equation \eqref{eq:kolmogorov} directly through numerical methods like finite differences faces the challenge that pathwise derivatives have to be approximated by finite differences which requires a fine discretization of the path space  $D([0,T],{\mathcal{X}})$.
We begin by establishing the connection between functional PDEs and  \emph{forward-backward stochastic differential equations} (FBSDEs).

\subsection{Backward SDEs}
\label{subsec:Ã¼bersicht_Ã¼ber_den_algorithmus}

For an overview on backward stochastic differential equations we refer, for example, to \cite{zhang2017backward}. 
We are interested in functional PDEs of the form
\begin{equation}
\label{eq:allgemeine_funktionale_pde}
\begin{split}
\cD_t F(x_t) &= f\left(t,x(t),F(x_t),\nabla_x F_t(x_t),\nabla_x^2F_t(x_t)\right) \quad \text{on } [0,T) \times C([0,T],{\mathcal{X}}),\\
F_T(x_T) &= g(x_T) \quad \forall x \in C([0,T], {\mathcal{X}})
\end{split}
\end{equation}
with a non-anticipative functional $F$, $T>0$ and mappings 
$f : [0,T] \times \R^4 \to \R$ and $g : C([0,T], {\mathcal{X}}) \to \R.$
Clearly, the non-linear Kolmogorov equation \eqref{eq:kolmogorov} falls into this class.

We consider a filtered probability space $(\Omega,\ccF,\bbF,P)$ together with a Brownian motion $W$. {In the} following, we set $\nabla_x F_T = \nabla_x^2 F_T := 0$.

\begin{lemma}
\label{lemma:numerik_bsde}
Assume that the non-anticipate functional $F \in \mathbb{F}_l^\infty$ satisfies $F \in \mathbb{C}_b^{1,2}([0,T))$ and solves \eqref{eq:allgemeine_funktionale_pde}. {Let $X=(X(t))_{t \in [0,T]}$ be the unique strong solution of the SDE 
$$
dX(t)=b(X(t))dt + \sigma(X(t))dW(t), \quad X(0):=x,
$$
with $b,\sigma: \mathbb{R} \to \mathbb{R}$ and $x \in \mathbb{R}$
and define the process $Y=(Y(t))_{t \in [0,T]}$ by}
\begin{align*}
Y(t) := F_t(X_t), \qquad \quad t \in [0,T].
\end{align*}
Then $Y$ solves the BSDE
{
\begin{align}\label{eq:BSDE}
Y(t) &= g(X_T) - \int_t^T  \nabla_xF_s(X_s)dX(s)  \\
&- \int_t^T \left(f\left(s,X(s),Y(s),\nabla_xF_s(X_s),\nabla_x^2F_s(X_s)\right) + \frac12 \sigma^2 (X(s))\nabla_x^2F_s(X_s) \right )ds, 
\end{align}
$P${-a.s.} for all $t \in [0,T].$}
\end{lemma}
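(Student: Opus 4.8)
The plan is to apply the functional Itô formula of \cite{ContFournie2013} to the process $Y(t) = F_t(X_t)$ and then recognize that the PDE \eqref{eq:allgemeine_funktionale_pde} forces the drift terms to collapse into precisely the form appearing in \eqref{eq:BSDE}. Concretely, I would first record that $X$ is a continuous semimartingale with $dX(t) = b(X(t))\,dt + \sigma(X(t))\,dW(t)$, so its quadratic variation satisfies $d\langle X\rangle(t) = \sigma^2(X(t))\,dt$. Since by hypothesis $F \in \mathbb{C}_b^{1,2}([0,T))\cap \mathbb{F}_l^\infty$, the functional Itô formula applies to $F_t(X_t)$ and yields, $P$-a.s. for all $t$,
\begin{align*}
F_t(X_t) - F_0(X_0) = \int_0^t \cD_s F(X_s)\,ds + \int_0^t \nabla_x F_s(X_s)\,dX(s) + \frac12 \int_0^t \nabla_x^2 F_s(X_s)\,d\langle X\rangle(s).
\end{align*}

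The next step is to substitute the PDE. On $[0,T)$ the functional $F$ solves \eqref{eq:allgemeine_funktionale_pde}, so I may replace the horizontal derivative $\cD_s F(X_s)$ by $f\bigl(s,X(s),F(X_s),\nabla_x F_s(X_s),\nabla_x^2 F_s(X_s)\bigr)$ inside the first integral, and replace $d\langle X\rangle(s)$ by $\sigma^2(X(s))\,ds$ in the last integral. Writing $Y(s) = F_s(X_s)$ and using the terminal identification $Y(T) = F_T(X_T) = g(X_T)$, the Itô decomposition between $t$ and $T$ becomes
\begin{align*}
g(X_T) - Y(t) = \int_t^T \Bigl(f\bigl(s,X(s),Y(s),\nabla_x F_s(X_s),\nabla_x^2 F_s(X_s)\bigr) + \tfrac12 \sigma^2(X(s))\nabla_x^2 F_s(X_s)\Bigr)ds + \int_t^T \nabla_x F_s(X_s)\,dX(s).
\end{align*}
Rearranging this identity for $Y(t)$ gives exactly \eqref{eq:BSDE}, which completes the derivation.

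The remaining technical points I would want to check are the hypotheses that make the functional Itô formula legitimate and the endpoints well-behaved. The continuity and boundedness-preserving conditions bundled into $\mathbb{C}_b^{1,2}([0,T))$ together with left-continuity ($F\in\mathbb{F}_l^\infty$) are precisely what \cite[Theorem 4.1]{ContFournie2013} requires, so invoking that result is clean; I would note that the stochastic integral $\int \nabla_x F_s(X_s)\,dX(s)$ is well-defined because $\nabla_x F \in \mathbb{B}$ and $X$ has continuous paths. The one delicate point is the behavior at the terminal time $T$: the formula applies on $[0,T)$, and the convention $\nabla_x F_T = \nabla_x^2 F_T := 0$ fixed just before the statement, combined with the left-continuity of $F$, is what lets me close the interval up to $T$ and identify $Y(T)=g(X_T)$ consistently. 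I expect the main (though minor) obstacle to be this careful handling of the endpoint $T$ and verifying that the horizontal-derivative substitution from the PDE is valid along the trajectory $X_s$ for $dP\otimes ds$-almost every $(\omega,s)$, rather than merely pointwise; everything else is a direct read-off from the functional Itô formula.
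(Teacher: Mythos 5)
Your proposal is correct and follows essentially the same route as the paper: apply the functional It\^o formula of \cite[Theorem 4.1]{ContFournie2013} to $Y(t)=F_t(X_t)$, substitute the PDE \eqref{eq:allgemeine_funktionale_pde} for the horizontal derivative, and rearrange the decomposition between $t$ and $T$. The only cosmetic difference is that you write the stochastic integral directly against $dX$ while the paper first splits it into its $dW$ and drift parts before recombining, and your remarks on the endpoint $T$ match the paper's brief "extend by continuity" step.
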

\begin{proof}
Observe that $F$ satisfies the assumptions of the functional It\^o-formula, Theorem 4.1 in  \cite{ContFournie2013}. By this, we obtain that 
{
\begin{align*}
Y(t) & = Y(0) + \int_0^t \left( \cD_s F(X_s) + \frac12 \sigma^2(X(s)) \nabla_x^2 F_s(X_s) + b(X(s))\nabla_xF_s(X_s)\right)ds \\
& \quad+ \int_0^t \sigma(X(s))\nabla_x F_s(X_s) dW(s).
\end{align*}}
Since $F$ solves  \eqref{eq:allgemeine_funktionale_pde} by assumption, we obtain for all  $t \in [0,T)$,
{
\begin{align*}
Y(t) &= Y(0) + \int_0^t \sigma(X(s)) \nabla_xF_s(X_s)dW(s) \\
&+ \int_0^t \left(f\left(s,X(s),Y(s),\nabla_xF_s(X_s),\nabla_x^2F_s(X_s)\right) + \frac12 \sigma^2(X(s)) \nabla_x^2F_s(X_s) + b(X(s)) \nabla_x F_s(X_s) \right)ds, 
\end{align*}
$P$-a.s.}
By continuity, this can be extended to $[0,T]$. 
Hence, for all $t \in [0,T]$, 
\begin{align*}
Y(t) &= Y(T) - (Y(T) - Y(t)) = \eqref{eq:BSDE}. \qedhere
\end{align*}
\end{proof}
{
\begin{remark}[Regularity of the value function] \label{rem:Regularity2}
	In general it is hard to find conditions under which the non-anticipative functional $F$ satisfies the regularity assumptions in Lemma \ref{lemma:numerik_bsde}. However, in the very recent work \cite{cosso_gozzi_rosestolatio_russo_2023} in Appendix B some conditions for the existence of a strong solution for smoothed second-order path-dependent Hamilton-Jacobi-Bellman equations are discussed.
\end{remark}}

\subsection{Deep learning of fractional gradients}

\begin{algorithm}[t] 
\SetAlgoLined
\SetKwInOut{Input}{Input}
\SetKwInOut{Output}{Output}

Consider a discretization in time, $
0 = t_0 < \dots < t_N = T$. The BSDE \eqref{eq:BSDE} is approixmated as follows
\begin{align}\label{eq:numerik_bsde_approximation}
Y(t_{n+1}) &\approx Y(t_n) + \nabla_x F_{t_n}(X_{t_n}){(X(t_{n+1}) - X(t_n))} \notag \\
&+ \Big( f\left(t_n,X(t_n),Y(t_n),\nabla_x F_{t_n}(X_{t_n}),\nabla_x^2 F_{t_n}(X_{t_n})\right) + \frac12 {\sigma^2(X(t_n))} \nabla_x^2 F_{t_n}(X_{t_n}) 
\end{align}
with
$Y(t_N) = g(X_{t_N})$. We proceed as follows:
\begin{enumerate}
\item Choose random starting values $Y(t_0)$, $\nabla_x F_{t_0}(X_{t_0})$ and $\nabla_x^2 F_{t_0}(X_{t_0})$.
\item Approximate $\nabla_x F_{t_n}(X_{t_n})$ and $\nabla_x^2 F_{t_n}(X_{t_n})$, $n=1,\dots,N-1$, via neural networks.
\item Repeat the following training step for the computation of a minibatch of size $M$ until a sufficient level of accuracy is reached:
\begin{itemize}
\item Sample $M$ random paths $(X(t_0),\dots,X(t_N))$.
\item Use \eqref{eq:numerik_bsde_approximation} to compute $m=1,\dots,M$ paths $(Y^m(t_0),\dots,Y^m(t_N))$. 
\item Compute the aggregated loss $\sum_{m=1}^M(Y^m(t_N) - g(X^m_{t_N}))^2$.
\item Update parameter of the neural net to minimize the loss function (for example by using stochastic gradient descent).
\end{itemize}
\end{enumerate}
\caption{Outline of the Algorithm}\label{algo}
\end{algorithm}

As a next step we detail how neural networks can be used for deep-learning of the required gradients $\nabla_x F_{t}(X_{t})$ and $\nabla_x^2 F_{t}(X_{t}).$ Consider a discrete time grid $0 = t_0 < \dots < t_N = T$. Since the solution $Y$ of \eqref{eq:BSDE} satisfies
$$ E_P\big[(Y_T - g(X_T))^2 \big] = 0, $$
a discretized approximation $Y^\theta$ can be {constructed} by minimizing
$$ \theta \mapsto E_P\big[(Y^\theta_N - g(X_{t_0},\dots,X_{t_N}))^2 \big]. $$
 The associated loss function is $\ell(\theta) = (Y^\theta_N - g(X_{t_0},\dots,X_{t_N}))^2$.
The value $g(X_T)$ is approximated by the piecewise constant path of $X$ based on $X_{t_0},\dots,X_{t_N}$.
We shortly denote this in the following by 
$$g(X^N_{T}):=g(X_{t_0},\dots,X_{t_N}), $$
and similarly for other functions depending on the discretized approximation $X^N$ of $X$.  

The parameter vector $\theta\in \R^\nu, \nu \in \mathbb{N},$ contains initial values and the parameters of the neural networks and will be specified step by step.

\subsubsection*{Initial values} 
The first three values  $(\theta_1,\theta_2,\theta_3)$ denote the initial values where $\theta_1=Y_0^\theta$ approximates $Y_{t_0}$, $\theta_2$ approximates $\nabla_x F_{t_0}(X_{t_0})$ and $\theta_3$ approximates $\nabla_x^2 F_{t_0}(X_{t_0}).
$

\subsubsection*{Functional derivatives}
Furthermore, we will use the neural networks $\cNN^{1}_n=\cNN^{1,\theta}_n$ as
approximations of $\nabla_x F_{t_n}(X_{t_n})$, $n=1,\dots,N-1$; the neural networks $\cNN^{2}_n=\cNN^{2,\theta}_n$ approximate $\nabla_x^2 F_{t_n}(X_{t_n})$.

More precisely, $\ccN_n^{1}:\R^n \to \R$ depends on the training parameters $\theta_{j_{n-1}+1},\dots,\theta_{i_n}$ while   $\ccN_n^{2}:\R^n \to \R$ depends on the training parameters $\theta_{i_n+1},\dots,\theta_{j_n}$ with $j_0=3$.

Then, our approximation $Y^N$ is given by (compare Equation \eqref{eq:BSDE})  $Y^N_0 := \theta_1$ and 
\begin{align*}
Y_{n+1}^N &:= Y_n^N + \cNN_n^{1}(X^N_{t_n}) \cdot {(X(t_{n+1}) - X(t_n))} \\
&\quad + \Big( f(t_n,X(t_n),Y_n^N,\cNN_n^{1}(X^N_{t_n}),\cNN_n^{2}(X^N_{t_n})) + \frac12 {\sigma^2(X(t_n))}\cNN_n^{2}(X^N_{t_n}) , 
\end{align*}
$n=1,\dots,N-1$. 

\subsubsection*{Optimization}
As mentioned above the goal of the algorithm is to find $\theta$ such that the expected loss is minimized. To achieve this a \emph{batch} of size $M$ of simulations is constructed and for each simulation $m$, the loss is given by
$$ \ell_m(\theta) := (Y^{m,\theta}_N - g(X^m_{t_0},\dots,X^m_{t_N}))^2. $$
The expectation is estimated by the aggregated loss
$$ \ell(\theta) = \sum_{m=1}^M \ell_m(\theta). $$
Now the parameter $\theta$ can be minimized as detailed in Algorithm \ref{algo} by, for example, stochastic gradient descent. 
The targeted approximation of $Y(0)=F_0(X_0)$ is the parameter $\theta_1$.

\subsection{Numerical results}

Our experiments showed that if the network is chosen too deep, the parameter $\theta_1$ (which is the one which we are mainly interested in) will be less efficiently trained or even not trained at all. For our goals, neural networks with 4 layers and rectified linear units (ReLU) as activation function achieved the best results.

For the numerical results of pricing up-and-in digital options (compare Equation \eqref{eq:barrier}) we consider the case already studied in the previous section, where we chose the parameters according to Equation \eqref{eq:asiatische_option_nichtlineares_vasicek_modell_parameter}.  

\begin{table}
\centering
\begin{tabular}{ r c  c  c  c  c  c  c  c }
\toprule
$x(0)$ &  & -0.3 & -0.2 & -0.1 & 0 & 0.1 & 0.2 & 0.3 \\
\midrule
\multicolumn{2}{l }{mean}     & 0.607 & 0.665 & 0.732 & 0.790 & 0.856 & 1.000 & 1.000 \\
\multicolumn{2}{l }{std.dev.} & 0.008 & 0.006 & 0.005 & 0.004 & 0.002 & 0.003 & 0.003 \\
\bottomrule
\end{tabular}
\ \\[2mm]
\caption{Computed  prices of the \textbf{barrier up-and-in digital options} (10 runs) with associated standard deviations for different initial values $x(0)$ of the non-linear affine process $X$. The model parameters are specified in Equation
\eqref{eq:asiatische_option_nichtlineares_vasicek_modell_parameter}. In most cases the optimal value is already achieved after 25.000 steps, the values here are computed with 100.000 steps, however.}
\label{table:barrier_option}
\end{table}
The outcome of the numerical deep learning algorithm is shown in Table \ref{table:barrier_option}. The algorithm computes, as described above the value $V_0(x_0)$ of the digital up-and-in barrier option as detailed in \eqref{eq:barrier}. Since the outcome of the algorithm is random, we simulated it 10 times and show the mean together with the standard deviation. As expected, prices are increasing in $x_0$. For a high initial value we face small difficulties as for $x(0)=0.2$ the value should  be exactly equal to $1$. For even higher values, the price remains similar.

\begin{figure}
\includegraphics[width=8cm,keepaspectratio]{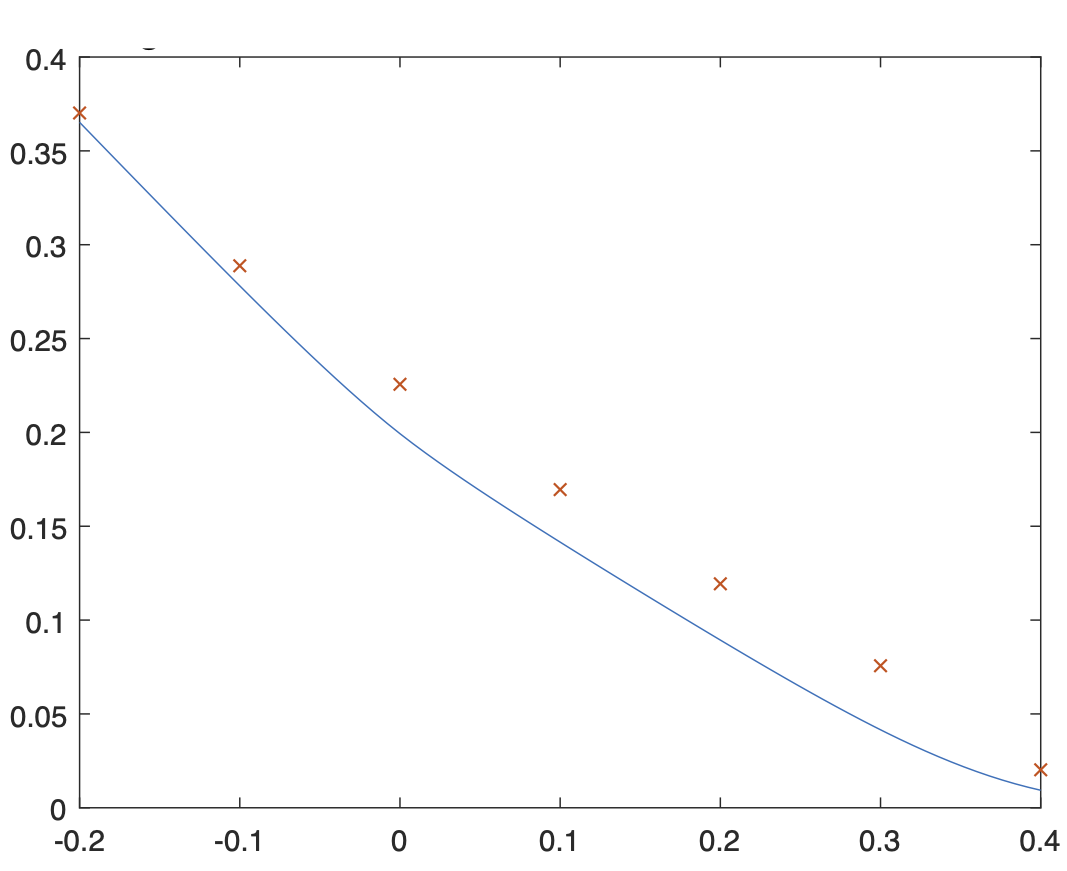}
\caption{Comparison of the finite difference method (solid line, blue) with the deep-learning approach (crosses) in the context of \textbf{Asian put options} according to the model detailed in Equation \eqref{eq:vergleich_modell_parameter}.}
\label{fig:vergleich}
\end{figure}

\subsubsection*{Benchmarking}
To obtain a comparison with existing methods we revisit the Asian put and compare the results from the machine-learning algorithm with the results from finite difference schemes, which are shown in Figure \ref{fig:vergleich}. The chosen parameters are: 
\begin{align}
\label{eq:vergleich_modell_parameter}
\ubar \gamma=\obar \gamma= \frac{1}{2}, \ \ubar{b}^0 = 0.05,\ \obar{b}^0 = 0.15,\ \ubar{b}^1 &= -3,\ \obar{b}^1 = -0.5,\ \ubar{a}^0 = 0.01,\ \obar{a}^0 = 0.02,\ \ubar{a}^1 = \obar{a}^1 = 0,
\end{align} 
with $T = 0.5$ and $K_1 = 0.2,\ K_2 = 10^6$.

We can make the following observations:
\begin{enumerate}
    \item The machine-learning algorithm produces slightly higher values in comparison to the finite difference scheme. The typical kink at $x(0)=0$ is very well visible. The difference is more pronounced in the middle of the observation area, i.e.\ between $0.1$ and $0.3$. 
    \item Clearly, the approximation of $g(X_T)$ with a finite number of gridpoints and piecewise constant extrapolation will cause difficulties when this approximation fails to reach $g(X_T)$. In the cases we consider here, this does not cause problems. 
\end{enumerate}

\section{Further applications} \label{sec:Applications_Credit}
The goal of this section is to highlight the generality of the results in Sections \ref{subsec:beispiele} and \ref{sec:BarierOptions} by spelling out
applications which require path-dependency. We will focus on credit and counterparty risk along with a brief connection to climate risk as examples.

\subsection{Uncertainty in credit risk}
In practical applications of models for credit risk, parameters are  not known and have to be estimated from data. Since defaults are  rare events, data is scarce and thus parameter uncertainty becomes very important. Amongst many other works, parameter uncertainty in credit risk models has been highlighted in \cite{tarashev2010measuring}. 
For a background on credit risk and the models we are using here we refer to \cite{MFE, SchmidtStute2004}.

\smallskip

Conceptually, there are two  different approaches to credit risk: the structural approach and the reduced-form  approach (the latter is also often referred to as intensity-based approach). 
To the best of our knowledge, credit risk under parameter uncertainty has only been studied within the reduced-form framework, see e.g. \cite{bz_2019}, \cite{biagini2020reduced} or \cite{fadina_schmidt_2019}. While in this setting path-dependence is less important, it is fundamental in structural approaches. Therefore,  the  tools presented in Sections \ref{subsec:beispiele} and \ref{sec:BarierOptions} are crucial for entering this field and also allow to handle structural credit risk models in a numerically efficient way, as we will show now. 

\subsubsection{The Merton model} \label{subsec:Merton_model}
We start with the {case of no path-dependence} and consider an extension of the famous Merton model, introduced in \cite{Merton1974}. In this framework, default occurs only at maturity $T$ and hence can not be placed into an intensity-framed approach as shown in \cite{GehmlichSchmidt2016MF}.

The main driver in this model is {the} firm value of the considered company. In the Merton framework itself {this} is a geometric Brownian motion with known parameters. Taking uncertainty into account and generalizing this setting substantially, let us assume 
that the firm value follows a generalized affine process under uncertainty. Recall that a generalized affine process is motivated from diffusion processes following Equation \eqref{eq:affiner_prozess_sde} and thus contains the Merton model as special case (by choosing $b_0=a_0=0$ and $\gamma = 1$). 

In line with this we consider as firm value the  path $x \in \Omega$. We fix a current time $t {\in [0,T]} $ and let our model be represented by all semimartingale laws in $\cA(t,x_t,\Theta)$.
The company's debt is {given} by one outstanding bond with face value $D>0$ and maturity $T>0.$ 
If the capital at time $T$ is insufficient to pay back the outstanding bound, i.e.\ when ${x}(T) < D$, a default occurs. The remaining capital is paid to the bond holders which receive the payoff $(D-x(T))^+$. On the other side, the shareholders receive the remaining value of the firm at time $T$, which  is given by $(x(T)-D)^+$.

Taking the perspective of the bond owner, the value of the associated payoff in the ambiguity framework we consider here might be represented by the infimum over all expectations taken from the considered models in $\cA(t,x_t,\Theta)$. More precisely, the bond holder is interested in the associated value function 
$$V_t(x_t):=\inf_{P \in \mathcal{A}(t,x_t, \Theta)}E_P[(D-{x}(T))^+] = - \sup_{P \in \mathcal{A}(t,x_t, \Theta)}E_P[-(D-{x}(T))^+]  $$ 
which can be represented by $V_t(x_t)=v(t,x(t))$ for some function $v: [0,T] \times \mathcal{X} \to \mathbb{R}$. Under continuity conditions for $v$ we can apply the results in \cite{FadinaNeufeldSchmidt2019}, see Section \ref{subsec:beispiele} for more details. In particular, this allows to represent the value function with a non-linear PDE, which can be used for numerical simulations as in \cite{FadinaNeufeldSchmidt2019}.

\subsubsection{The Black-Cox model} \label{subsec:BlackCox_model}
Since in the Merton model default does only occur at $T$, \cite{black_cox_1976} proposed a first-passage time model where  the default {takes place} at the \emph{first time} when the firm value drops under a pre-specified threshold. 
This approach is inherently path-dependent  and therefore the simpler methods in the Merton model, as proposed in Section 
\ref{subsec:Merton_model}, can no longer be applied which we will show in the following.

As above the firm value is given by the path $x\in \Omega$ and {we consider} model under uncertainty from starting time $t{\in (}0{,T]}$ on {represented by} the generalized affine model under uncertainty, ${\cA}(t,x_t,\Theta)$. 
Let $(D(t))_{t {\in[0,T]}}$ be {a sequence of deterministic and time-dependent thresholds.} The default time $\tau$ in the Black-Cox model is  the predictable stopping time given by 
$$
\tau:=\inf \lbrace t \geq 0: x(t) \leq  D(t) \rbrace {\wedge T}.
$$

Considering a {put} option in this setting requires taking default into account. While in reality default might not lead to a complete loss of the capital, for simplicity we consider zero recovery for stock investors. 
{In this case} the payoff of a {put} option on the stock price with strike $K>0$ and maturity $T$ is {given} by
\begin{equation} \label{eq:PayoffBlackCoxModel}
g(x_T):={(K-x(T))^+} \ind{ \tau > T }
= {(K-x(T))^+}\ind{ x(t) > D(t) \colon t \in [0,T]} .
\end{equation}
This is a highly path-dependent payoff function and for $D(t){\equiv} D$ with $D>0$ it can be seen as a combination of a call option with the payoff given in \eqref{eq:barrier}. The methods in Section \ref{sec:BarierOptions} allow to numerically approximate the associated value function 
$$
V_t(x_T):=\sup_{P \in \mathcal{A}(t,x_t, \Theta)} E_P[g(x_T)]
$$
of the payoff in \eqref{eq:PayoffBlackCoxModel}. Here, uncertainty is {taken} into account by computing the worst-case price via the supremum.

\subsection{Climate-risk}
The Black-Cox approach also provides a method to assess climate transition risk\footnote{This idea was propagated in \cite{janosik2023} which will be published in a forthcoming paper.}. In such an approach one challenges the firm value by substracting a climate-related damage which we assume in the first step to be deterministic and denote it by the function $t \mapsto C(t)$.  

The resulting climate-stressed default time is then given by 
$$
\tau^*:=\inf \lbrace t \geq 0: x(t) - C(t) \leq  D(t) \rbrace {\wedge T}=
\inf \lbrace t \geq 0: x(t)  \leq  D(t)+C(t) \rbrace {\wedge T},
$$	
which shows that this setting can be embedded in the previous framework by replacing the default barrier $D$ with $C+D$. Moreover, if there is uncertainty on the climate impact $C$, due to monotonicity, one could directly solve this problem by considering the worst-case climate impact.

\subsection{Bilateral counterparty risk and a multi-dimensional approach} \label{subsec:Counterparty_risk}
We now consider a framework which allows to take into account bilateral counterparty risk, as for example in Section 6.2 in \cite{huyen_2010}. To this end, it is necessary to consider a multi-dimensional extension of our framework as already pointed out in Remark \ref{rem:multidim}.

To do so, we study multi-dimensional non-linear generalized affine processes following \cite{biagini_bollweg_oberpriller_2022}. Indeed, note that  by setting $\Omega:=C([0,T], \mathbb{R}^n)$ for $n >1$ we  arrive at a $n$-dimensional non-linear generalized affine process with history $x=(x^1,....,x^n)$ until time $t$ by modeling the appropriately adjusted semimartingale characteristics of the $n$-dimensional canonical process $X$ on $\Omega$, as follows. Fix a non-empty and closed state space $\textnormal{M}\subseteq\mathbb{R}^n$.
    Let $b=(b_0,\ldots,b)\in(\mathbb{R}^n)^{n+1}$, $a=(a_0,\ldots,a_n)\in\mathbb{S}^{n+1}$, {where, $\mathbb{S}$ denotes the vector space of symmetric $(n \times n)$-matrices.}
    Define for $y=(y^1,\ldots,y^n)^T \in \mathbb{R}^n$ the following functions	
    \begin{align}
    	b(y)&:=\Big(b_0+(b_1,\ldots,b_n)\,y\Big)\,\mathbf{1}_{\textnormal{M}}(x)
    	=\Big(b_0+\sum_{i=1}^n y^i b^i \Big)\,\mathbf{1}_{\textnormal{M}}(y) \in \mathbb{R}^n, \label{eq:theta1}\\
    	a(y)&:=\Big(a_0+(a_1,\ldots,a_n)\,y\Big)\,\mathbf{1}_{\textnormal{M}}(y) 
    	=\Big(a_0 + \sum_{i=1}^n {y^i} a_i\Big)\,\mathbf{1}_{\textnormal{M}}(y) \in \mathbb{S}.\label{eq:theta2}
     \end{align}
    We refer to $\theta=(b,a)
    \in(\mathbb{R}^n)^{n+1}\times\mathbb{S}^{n+1}$ as \emph{parameter}, and define for each parameter the map
    \begin{equation}\label{eq:theta}
\mathbb{R}^n\rightarrow\mathbb{R}^n\times\mathbb{S},\quad   x\mapsto\theta(x):=(b(x),a(x)),
    \end{equation}
    with $a$, $b$ in \eqref{eq:theta1} - \eqref{eq:theta2}.
Similarly for a subset $\Theta\subseteq
    (\mathbb{R}^n)^{n+1}\times\mathbb{S}^{n+1}$ and $x\in\mathbb{R}^n$, define
    \begin{equation}\label{eq:Theta}
    	\Theta(x):=\Big\{\theta(x)\,:\,\theta\in\Theta\Big\}\subseteq\mathbb{R}^n\times\mathbb{S}.
    \end{equation}
    A set $\Theta$ of parameters is \emph{closed}, if it is closed with respect to the topology which makes $(\mathbb{R}^n)^{n+1}\times\mathbb{S}^{n+1}$ a separable metric space. For $x \in \Omega, t \geq 0$, we define the set of probability measure{s} $\mathcal{A}(t,x_t,\Theta)$ as follows:
    \begin{equation}\label{eq:NLAJD}
        \Big\{P\in\mathfrak{P}_{\operatorname{sem}}^{\operatorname{ac}}(t)\,:\,\;P(X_t=x_t)=1;\:(\beta^P(s), \alpha(s)^P)\in\Theta(X_s)\text{ for }dt\otimes dP\text{-a.e.} (\omega, s) \in \Omega \times (t,T]\Big\}.
    \end{equation}
    For $\Theta\subseteq(\mathbb{R}^n)^{n+1}\times\mathbb{S}^{n+1}$ non-empty and closed, the tuple $(X,\{\mathcal{A}(t,x_t,\Theta)\}_{x\in\textnormal{M}},\textnormal{M})$, where $X$ is the canonical process and $\mathcal{A}(t,x_t,\Theta)$ is defined in \eqref{eq:NLAJD}, is called \emph{non-linear generalized affine process with history $x=(x^1,....,x^n)$ until time $t$} with parameter set $\Theta$ and state space $\textnormal{M}$.

For simplicity we consider the two-dimensional case with  $d=2$ and study a portfolio consisting of two firms whose corresponding asset values (first without any bilateral counterparty risk) $x=(x^1, x^2)$ are modeled as a two-dimensional generalized affine process under uncertainty. 

The main goal is to capture contagion effects, i.e.\ the response of company 2 on a default of company 1 and vice versa. To achieve this we first consider the default time without bilateral counterparty risk and then introduce the effect of the default on the surviving company. More specifically, in this setting firm $i$ defaults at time $\tau_i$ and the value of firm $i$ decreases if company $j$ defaults due to a contagion effect. For more than two companies this procedure would have to be iterated. 

Let $(D^i(t))_{t \geq 0}$, $i=1,2$ be deterministic (time-dependent) default thresholds and define the default time of company $i$ (without taking contagion into account)  by
$$
\tau^i:=\inf \lbrace t \geq 0: x_t^i \leq D^i(t) \rbrace {\wedge T}.
$$

By $e^{i,j}>0$ for $i=1,2, \ i \neq j$ we denote the proportional effect on the value of firm $i$ in the case where firm $j$ defaults. Let $\bar{x}:=(\bar{x}^1, \bar{x}^2)$ be the firm's value taking into account the external counterparty default as well as the contagion given by
\begin{align}
    \bar{x}^i_t&:=x^i_t \ind{\tau_i > t, \tau_j > t} + x_t^i  (1-e^{i,j}) \ind{\tau_2 \leq t < \tau_1} \nonumber \\
    &= x^i_t  \ind{x_t^i > D^i(t), x_t^j > D^j(t) \text{ for all }t \in [0,T] } + x_t^i (1-e^{i,j})  \ind{x_t^i > D^i(t) \text{ for all }t \in [0,T]} \ind{t \geq \tilde{t} \text{ with } x^j(\tilde{t})= D^j (\tilde{t})}. \nonumber
\end{align}

The value function of a {put} option on firm $i$ which takes into account the risk of a default and the contagion effect is then given by
\begin{equation}\label{eq:bilateral}
V_t(x_T):= \sup_{P \in \mathcal{A}(t,x_t, \Theta)} E_P[{(K-\bar{x}^i(T))^+}].
\end{equation}
In this case, the approach with neural networks as introduced in Section \ref{sec:BarierOptions} opens the door to derive the price even for a payoff function which is strongly path-dependent.

In general, there is a large uncertainty on the height of the counterparty effects $e^{i,j}$ but, due to the present monotonicity, considering an uncertainty interval $[\underbar e^{i,j}, \bar e^{i,j}]$ for these effects and assuming that they are independent of the firm values, one could directly use the upper bound for computing the worst-case expectation in \eqref{eq:bilateral}.

\begin{appendix}

\section{Semimartingale comparison} \label{app:PO}
For the convenience of the reader we shortly recall the main result in \cite{Bergenthum_Rueschendorf_2007} and the results from Appendix 1 in \cite{FadinaNeufeldSchmidt2019}. 
For a real-valued Markov process $S^*$ and a terminal time $T$ we define the propagation operator
$$ \cG^g(t,x) := E[g(S^*_{T})|S^*_t=x]. $$
 Stability under the propagation operator (PO) will be a crucial property.

\begin{definition}
	For some function class $\cF$ and some Markov process $S^*$, we say that PO$(S^*,\cF)$ holds if $\cG^g(t,\cdot) \in \cF$ for all $0 \le t \le T$ and for all $g \in \cF$.
\end{definition}

Consider a second Markov process $S$, possibly on a different probability space, and denote by
\begin{align*} 
	\cF_{icx}:=\{ f:\R \to \R, \text{increasing and convex} \}.
\end{align*}

Proposition 11 in \cite{FadinaNeufeldSchmidt2019} now implies the propagation of increasing and convex functions for suitable Markov processes:
\begin{proposition}\label{prop:PO}
Let $S^*$ be a strong and homogeneous Markov process with continuous sample paths sucht that    
\begin{align} \label{eq:42}
          E[S^*_{t}|S^*_0=x]=\pi_0(t) +  x \pi_1(t), \qquad 0 \le t \le T
\end{align} holds with $\pi_1(t)\neq 0$ for all $t \in [0,T]$.
Then, PO$(S^*,\cF_{icx})$ holds.
\end{proposition}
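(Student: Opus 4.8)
The plan is to transfer the problem to the transition semigroup of $S^*$ and then to show that this semigroup preserves the class $\cF_{icx}$. Writing $P_s h(x) := E[h(S^*_s)\mid S^*_0 = x]$, time-homogeneity together with the Markov property gives $\cG^g(t,x) = E[g(S^*_{T-t})\mid S^*_0 = x] = P_{T-t}g(x)$, so it is enough to prove that $P_s(\cF_{icx}) \subseteq \cF_{icx}$ for every $s \in [0,T]$. I would first record two consequences of \eqref{eq:42}: the operator $P_s$ fixes constants and sends the identity to the affine function $x \mapsto \pi_0(s) + \pi_1(s)x$, hence maps affine functions to affine functions. Since $\pi_1(0) = 1$ and $s \mapsto \pi_1(s)$ is continuous (a mild consequence of path-continuity and moment bounds) with $\pi_1(s) \neq 0$ on $[0,T]$, the intermediate value theorem forces $\pi_1(s) > 0$ throughout; this positive sign is exactly what makes $P_s$ preserve monotonicity already at the level of affine functions and is used below.

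For a general $g \in \cF_{icx}$ I would establish the two required properties separately. Monotonicity of $P_s g$ follows once the laws $\mathrm{Law}(S^*_s \mid S^*_0 = x)$ are stochastically increasing in $x$, which for the one-dimensional continuous strong Markov processes of the type considered here is obtained from an ordered coupling (non-crossing of the associated flows). For convexity, the affine mean \eqref{eq:42} is precisely the trace of an \emph{affine drift}: the deterministic flow of an affine drift is affine in its starting point and therefore preserves convexity, while the purely diffusive part preserves convexity by a Jensen/maximum-principle argument applied to the backward Kolmogorov equation of $S^*$. The clean way to package this is a coupling of three copies $S^{(1)}, S^{(2)}, S^{(3)}$ started from $x_1 < x_2 < x_3$ with $x_2 = \lambda x_1 + (1-\lambda)x_3$, arranged so that $S^{(2)}_s \le \lambda S^{(1)}_s + (1-\lambda)S^{(3)}_s$ almost surely; since $g$ is increasing and convex, applying $g$ and taking expectations yields $P_s g(x_2) \le \lambda P_s g(x_1) + (1-\lambda)P_s g(x_3)$. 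This is exactly the mechanism of Proposition 11 in \cite{FadinaNeufeldSchmidt2019}, so the proof reduces to verifying its hypotheses — strong homogeneous Markovianity, continuity of paths, and the affine conditional mean \eqref{eq:42} — and invoking it.

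The main obstacle is the convexity step: unlike monotonicity, it genuinely uses the affine structure of the drift rather than merely the ordering of the means, and justifying the three-point coupling is where the real work sits. Two smaller points also need care. First, integrability, namely that $P_s g(x) < \infty$ for $g$ of polynomial growth; this follows from the existence of all moments of $S^*$, exactly as for the CIR process in the proof of Proposition \ref{korollar:kolmogorov_folgenrung2}. Second, the behaviour at the boundary when the state space is $\R_{>0}$, where one must ensure that both the ordered coupling and the convexity coupling respect the boundary at $0$.
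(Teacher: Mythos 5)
Your proof ultimately lands where the paper does: the paper offers no independent argument for this proposition at all, but simply states that it is Proposition 11 of \cite{FadinaNeufeldSchmidt2019}, and your final step --- ``verify the hypotheses and invoke Proposition 11'' --- is exactly that same move. Since the hypotheses of the cited result are verbatim the hypotheses of the statement, this is legitimate, and in that sense your proposal is correct and takes essentially the same approach.

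One caution about the ``mechanism'' you sketch on the way there. The monotonicity half is fine: for a one-dimensional strong Markov process with continuous paths, the coalescing coupling (run two independent copies from $x_1<x_2$ and glue them at the first meeting time) does give stochastic monotonicity, hence preservation of increasing functions. The convexity half, however, is not proved by the three-point almost-sure coupling $S^{(2)}_s \le \lambda S^{(1)}_s + (1-\lambda)S^{(3)}_s$ that you propose as the ``clean packaging'': such a pathwise convexity domination generally does not exist (the flow map $x\mapsto S_s(x)$ of a diffusion is not convex path by path except in degenerate cases such as a flow affine in its initial condition), and convexity preservation is intrinsically a statement in distribution. The arguments that actually work combine stochastic monotonicity with the affine conditional mean \eqref{eq:42} via backward induction over a time grid, a maximum-principle/PDE argument, or the comparison machinery of \cite{Bergenthum_Rueschendorf_2007}; none of them produce an a.s.\ three-point ordering. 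Since you do not in the end rely on that coupling but on the citation, the proof stands, but if you wanted a self-contained argument this is precisely the step that would fail as written.
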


It was also observed  that \eqref{eq:42} holds for affine processes. More precisely, when $E[e^{uX_t}|X_0=x] = e^{\phi(t) + \psi(t)x}$,   an application of Theorem 13.2 in \cite{JacodProtter} yields that
\begin{align} 
\label{eq:moments}
	E[X_t|X_0=x] = \partial_u \phi(t)+ x \partial_u \psi(t); 
\end{align}
the expressions for $\phi$ and $\psi$ are available in Section 10.3.2 in  \cite{Filipovic2009}.

\end{appendix}

\bibliographystyle{agsm}
\bibliography{Non_linear_Affine_processes}

\end{document}